\documentclass{sig-alternate-05-2015}


\newtheorem{definition}{Definition}
\newtheorem{property}{Property}
\newtheorem{thm}{Theorem}
\newtheorem{cor}{Corollary}
\newtheorem{lem}{Lemma}
\newtheorem{example}{Example}

\newcommand{\cy}{\mathcal{Y}}

\newcommand{\E}{\mathbb{E}}

\newcommand{\xmi}{{X^{-i}}}

\begin{document}

\CopyrightYear{2016} 
\setcopyright{acmlicensed}
\conferenceinfo{CCS'16,}{October 24 - 28, 2016, Vienna, Austria}
\isbn{978-1-4503-4139-4/16/10}\acmPrice{\$15.00}
\doi{http://dx.doi.org/10.1145/2976749.2978308}

\clubpenalty=10000 
\widowpenalty = 10000

\title{Differential Privacy as a Mutual Information Constraint}

\numberofauthors{2}

\author{
\alignauthor
Paul Cuff\\
       \affaddr{Princeton University}\\
\alignauthor
Lanqing Yu\\
       \affaddr{Princeton University}\\
}

\maketitle

\begin{abstract}
Differential privacy is a precise mathematical constraint meant to ensure privacy of individual pieces of information in a database even while queries are being answered about the aggregate.  Intuitively, one must come to terms with what differential privacy does and does not guarantee.  For example, the definition prevents a strong adversary who knows all but one entry in the database from further inferring about the last one.  This strong adversary assumption can be overlooked, resulting in misinterpretation of the privacy guarantee of differential privacy.

Herein we give an equivalent definition of privacy using mutual information that makes plain some of the subtleties of differential privacy.  The mutual-information differential privacy is in fact sandwiched between $\epsilon$-differential privacy and $(\epsilon,\delta)$-differential privacy in terms of its strength.  In contrast to previous works using unconditional mutual information, differential privacy is fundamentally related to conditional mutual information, accompanied by a maximization over the database distribution.  The conceptual advantage of using mutual information, aside from yielding a simpler and more intuitive definition of differential privacy, is that its properties are well understood.  Several properties of differential privacy are easily verified for the mutual information alternative, such as composition theorems.
\end{abstract}

%
%
\begin{CCSXML}
<ccs2012>
<concept>
<concept_id>10002978.10003018.10003021</concept_id>
<concept_desc>Security and privacy~Information accountability and usage control</concept_desc>
<concept_significance>500</concept_significance>
</concept>
<concept>
<concept_id>10002950.10003712</concept_id>
<concept_desc>Mathematics of computing~Information theory</concept_desc>
<concept_significance>300</concept_significance>
</concept>
</ccs2012>
\end{CCSXML}

\ccsdesc[500]{Security and privacy~Information accountability and usage control}
\ccsdesc[300]{Mathematics of computing~Information theory}
%
%


\begin{keywords}
Differential privacy, information theory.
\end{keywords}

\section{Introduction}

Differential privacy is a concept proposed in \cite{dwork:dp} for database privacy.  It allows queries to be answered about aggregate quantities of data while protecting the privacy of individual entries in the database.  In the absence of a precise mathematical framework such as differential privacy, practitioners have been tempted to use various rules-of-thumb to protect privacy (e.g. ``don't answer a query that averages fewer than $k$ entries together''---see the query restriction approach in \cite{Adam:1989:SMS:76894.76895}).  Instead, differential privacy directly addresses the statistical distinguishability of the database and has led to algorithms for answering general queries with just the right amount of randomness used in order to preserve privacy.\footnote{Differential privacy does not assume the adversary has any computational limitation.}

Differential privacy requires that two adjacent databases, which differ in only one entry, are statistically indistinguishable, as measured by a probabilistic metric defined in Section~\ref{sec:pre}.  This guarantee is particularly effective for making individuals feel comfortable contributing personal information to a dataset.  For instance, if a person decides to participate in a survey, his answers only constitute one response out of the entire collection, and the responses of other people remain unchanged.  Differential privacy is meant to assure the one participant that his answers are concealed.

This privacy metric has gained a lot of traction in recent years.  The main contribution of this work is to cast differential privacy as a mutual information constraint.  There have been many attempts in the literature to connect differential privacy to mutual information.  Here we give not only a connection but an equivalence.

To briefly summarize the main result, consider a database  $X^n = (X_1,\ldots,X_n)$ that returns a query response $Y$ according to a random mechanism $P_{Y|X^n}$.  Let $X^{-i}$ denote the set of database entries excluding $X_i$.

\begin{definition}[$(\epsilon, \delta)$-Differential Privacy \cite{dwork:epsilondelta}]
\label{def:eddp}
    A randomized mechanism $P_{Y|X^n}$ satisfies {\em $(\epsilon, \delta)$-differential privacy} if for all neighboring database instances $x^n$ and $\tilde{x}^n$
    \begin{equation}
    \label{eq:intro DP}
        P_{Y|X^n=x^n} \stackrel{(\epsilon, \delta)}{\approx} P_{Y|X^n=\tilde{x}^n},
    \end{equation}
    where the approximation in \eqref{eq:intro DP} is defined later in Definition~\ref{def:ed close}, and neighboring database instances are defined in Definition~\ref{def:neighbor} as any pair of database vectors that differ in only one entry (i.e. Hamming distance one).\footnote{Another similar definition for ``neighbor'' exists in the literature, involving the removal of one entry of the database.}
    \end{definition}

\begin{definition}[Mutual-Information Diff. Priv.]
\label{def:midp}
    A randomized mechanism $P_{Y|X^n}$ satisfies {\em $\epsilon$-mutual-information differential privacy} if \begin{equation}
    \label{eq:intro midp}
        \sup_{i,P_{X^n}} I( X_i;Y|X^{-i}) \leq \epsilon \text{ nats}.
    \end{equation}
    Note that {\em nats} are the information units that result from using the natural logarithm instead of the logarithm base two, which would give bits.
\end{definition}

The main claim of this paper, which appears in Section~\ref{section main result}, is an equivalence between mutual-information differential privacy (MI-DP) and the standard definition of $(\epsilon,\delta)$-differential privacy ($(\epsilon,\delta)$-DP).  The original definition of differential privacy \cite{dwork:dp}, defined formally in Section~\ref{sec:pre}, parameterized privacy with a single positive number $\epsilon$.  For various reasons it has since been relaxed \cite{dwork:epsilondelta} to have two parameters $\epsilon$ and $\delta$ playing multiplicative and additive roles in the likelihood constraint.  We refer to the original DP as $\epsilon$-DP and the relaxed form as $(\epsilon,\delta)$-DP.  In this notation, $\epsilon$-DP is simply $(\epsilon,0)$-DP.

The claim herein is that MI-DP is sandwiched between these two definitions in the following sense:  It is weaker than $\epsilon$-DP but stronger than $(\epsilon,\delta)$-DP.  That is, a mechanism that satisfies $\epsilon$-DP also satisfies $\epsilon$-MI-DP.\footnote{The other direction is not true in general, as there exist mechanisms which satisfy $\epsilon$-MI-DP but not $\epsilon'$-DP for any $\epsilon'$.}  Similarly, if $\epsilon$-MI-DP holds then $(\epsilon',\delta)$-DP also must hold, where $\epsilon'$ and $\delta$ vanish as $\epsilon$ goes to zero.  In fact, the connection between MI-DP and $(\epsilon,\delta)$-DP is an equivalence if either the domain or range of the query mechanism is a finite set.

The advantage of this alternative but equally strong definition of differential privacy is that mutual information is a well-understood quantity.  It provides a clear picture of what differential privacy does and does not guarantee.  Furthermore, several properties of differential privacy are immediate to prove in this form. 

While the mathematics of differential privacy, in its standard form, are straightforward, an intuitive understanding can be elusive.  The definition of $(\epsilon,\delta)$-DP involves a notion of neighboring database instances.  Upon examination one realizes that this has the affect of assuming that the adversary has already learned about all but one entry in the database and is only trying to gather additional information about the remaining entry.  We refer to this as the strong adversary assumption, which is implicit in the definition of differential privacy.  Notice that MI-DP needs no definition of neighborhood.  The strong adversary assumption is made explicit in the conditioning within the conditional mutual information term.

The strong adversary assumption is both a feature and a vulnerability of the definition of differential privacy.  It is a feature when recruiting individual participants for a survey.  The individual can decide whether or not to participate but cannot do anything about the information contributed by others (which may inform on them indirectly).  Differential privacy assures them that even with access to everyone else's responses, the survey reports will not further reveal anything about their individual response.  However, DP also has its shortcomings as a privacy guarantee.  Among all adversaries with different prior knowledge of the database, the strong adversary may not be the one which benefits the most from the query output.  Indeed, it is shown in \cite{kifer:nfl} that a weaker adversary can compromise privacy severely if the entries in the database are correlated, which is quite typical in certain applications such as social networks.

As an equivalent privacy metric, MI-DP benefits and suffers in the same way.  Fortunately, the definition of MI-DP puts this potential weakness in plain sight.  It shows explicitly that information leakage is only being restricted conditioned on the remainder of the database being known.  Clearly, this does not bound the unconditional mutual information when correlations are present.

Somewhat paradoxically, mutual information can serve simultaneously as both a measure of privacy, as in MI-DP, and as a quantification of utility---for example, the mutual information between the entire database and the query response, $I(X^n;Y)$.  The close connection between mutual information and estimation and detection further captures the privacy-utility trade-off.

Several works \cite{mcgregor:two, de:lower, barthe:min, alvim:min, duchi:lp} relate mutual information to differential privacy by upper bounding mutual information given a differential privacy achieving mechanism.  One common point of these works is that they all use unconditional mutual information rather than conditional.  Often, the conclusion is a bound on the unconditional mutual information between the whole database and the private output.  The use of conditional mutual information in Definition~\ref{def:midp} captures the prior knowledge of the database possessed by a potential adversary (i.e. the strong adversary assumption implicit in DP).  This is crucial in developing an equivalence with the standard DP definition.

Another crucial ingredient that some of the literature fails to properly incorporate (e.g. \cite{wang-ying-zhang14}) when applying mutual information to differential privacy is that differential privacy is a property of the query mechanism and assumes no specific prior distribution on the database.  Mutual information, on the other hand, is not well defined without a joint distribution, which must include a distribution for the database.  The remedy is to maximize the mutual information over all possible distributions on the database, as seen in the definition of MI-DP in Definition~\ref{def:midp}.\footnote{This approach, using worst-case database distribution, appears in various works throughout the literature, e.g. \cite{duchi-jordan-wainwright14}.}  Had we defined MI-DP with respect to any particular database distribution (e.g. with independent and identically distributed entries), we would have significantly reduced its strength as a privacy metric.  The formula in Definition~\ref{def:midp} in fact looks like a channel capacity formula one would encounter in expressing the fundamental limit of communication through a noisy channel.  This maximization removes any distributional assumption and makes MI-DP a property of the mechanism itself.

\section{Preliminaries}
\label{sec:pre}

\subsection{Notation}

The set $\{1, 2, \cdots, m\}$ is denoted as $[m]$. An index set ${\cal I}$ is a subset of $[n]$ whose elements are enumerated as $(i_1, \cdots, i_{|{\cal I}|})$, where $|\cdot|$ denotes the cardinality of a set.

We use $X^n$ as shorthand notation for the sequence of random variables $(X_1, \cdots, X_n)$. The symbol $\xmi$ denotes the sequence of $n - 1$ random variables $(X_1, \cdots, X_{i - 1}, X_{i + 1}, \cdots, X_n)$, in other words, all of $X^n$ except $X_i$. The lower case symbol $x^{-i}$ is an instance of $\xmi$. For any index set ${\cal I}$, we use $X_{\cal I}$ to denote the sequence of random variables $(X_{i_1}, \cdots, X_{i_{|{\cal I}|}})$ specified by ${\cal I}$. Similarly the lower case $x_{\cal I} = (x_{i_1}, \cdots, x_{i_{|{\cal I}|}})$ is an instance of $X_{\cal I}$.

A database $X^n$ consists of $n$ entries, where the $i$-th entry takes values from ${\cal X}_i$. 
\begin{definition}[Neighbor]
\label{def:neighbor}
    Two database instances $x^n$ and $\tilde{x}^n$ are {\em neighbors} if they differ in only one entry.  In other words,
    \begin{equation}
        d_H (x^n, \tilde{x}^n) = 1,
    \end{equation}
    where $d_H(\cdot, \cdot)$ is Hamming distance.
\end{definition}

The output of a privacy mechanism is a random variable represented as $Y$ and takes values from $\cy$.

\subsection{Statistical Indistinguishability}

Two probability distributions can be considered statistically indistinguishable if they are close under an appropriate metric.  The criterion for indistinguishability used in the standard definition of differential privacy is the following.

\begin{definition}[$(\epsilon, \delta)$-Closeness]
\label{def:ed close}
    Two probability distributions $P$ and $Q$ over the same measurable space $(\Omega, {\cal F})$ are $(\epsilon, \delta)$-close, denoted as
    \begin{equation}
        P \stackrel{(\epsilon, \delta)}{\approx} Q
    \end{equation}
    if
    \begin{align}
        P(A) &\leq e^{\epsilon} Q(A) + \delta, \quad \forall A \in {\cal F}, \label{eq:ed p less q} \\
        Q(A) &\leq e^{\epsilon} P(A) + \delta, \quad \forall A \in {\cal F}. \label{eq:ed q less p}
    \end{align}
\end{definition}

Consider two special cases,  $\delta=0$ and $\epsilon=0$.  If $\delta = 0$, then $P$ and $Q$ are mutually absolutely continuous, denoted as $P \ll \gg Q$, and $(\epsilon, 0)$-closeness is a statement about the Radon-Nikodym derivative $\frac{d P}{d Q}$:
\begin{equation}
    \label{eq:e0 restatement}
    P \stackrel{(\epsilon, 0)}{\approx} Q \quad \Longleftrightarrow \quad \left| \ln \frac{d P}{d Q}(a) \right| \leq \epsilon \quad \forall a \in \Omega.
\end{equation}
On the other hand, with $\epsilon=0$, $(0, \delta)$-closeness is a statement about the total variation distance:
\begin{equation}
    \label{eq:0d restatement}
    P \stackrel{(0, \delta)}{\approx} Q \quad \Longleftrightarrow \quad \|P - Q \|_{TV} \leq \delta.
\end{equation}

We can also relate $(\epsilon, \delta)$-closeness to Kullback-Leibler divergence, denoted as $D(\cdot \| \cdot)$.  For example, by relaxing the right side of \eqref{eq:e0 restatement} to be an expected value rather than a statement about all $a \in \Omega$, we immediately get the following implication:
\begin{equation}
    P \stackrel{(\epsilon, 0)}{\approx} Q \quad \Longrightarrow \quad
    \begin{aligned}
    D(P\|Q) &\leq \epsilon \text{ nats}, \\
    D(Q\|P) &\leq \epsilon \text{ nats}.
    \end{aligned}
\end{equation}
Tighter expressions of the relationship to Kullback-Leibler divergence are given next, in Properties~\ref{PROPERTY:E TO KL} and \ref{property:kl to d}.  We give a proof of Property~\ref{PROPERTY:E TO KL} in Appendix~\ref{PROOF:E TO KL}.

\begin{property}
\label{PROPERTY:E TO KL}
    \begin{equation}
        P \stackrel{(\epsilon, 0)}{\approx} Q \quad \Longrightarrow \quad
        \begin{aligned}
            D(P\|Q) &\leq \min \left\{ \epsilon, \epsilon^2 \right\} \text{ nats}, \\
            D(Q\|P) &\leq \min \left\{ \epsilon, \epsilon^2 \right\} \text{ nats}.
        \end{aligned}
    \end{equation}
    In fact, the tightest possible statement of this form is
    \begin{equation}
    \label{eq:e to kl tightest}
        P \stackrel{(\epsilon, 0)}{\approx} Q \quad \Longrightarrow \quad
        \begin{aligned}
            D(P\|Q) &\leq \epsilon \frac{\left( e^{\epsilon} - 1 \right) \left( 1 - e^{-\epsilon} \right)}{\left( e^{\epsilon} - 1 \right) + \left( 1 - e^{-\epsilon} \right)} \text{ nats}, \\
            D(Q\|P) &\leq \epsilon \frac{\left( e^{\epsilon} - 1 \right) \left( 1 - e^{-\epsilon} \right)}{\left( e^{\epsilon} - 1 \right) + \left( 1 - e^{-\epsilon} \right)} \text{ nats}.
        \end{aligned}
    \end{equation}
    Equality on the right side of \eqref{eq:e to kl tightest} can be achieved with binary distributions.  For small $\epsilon$, the right side of \eqref{eq:e to kl tightest} is asymptotically $\frac{1}{2} \epsilon^2$ nats.
\end{property}

\begin{property}
\label{property:kl to d}
    By Pinsker's inequality,
    \begin{equation}
        D(P\|Q) \leq \epsilon \text{ nats} \quad \Longrightarrow \quad P \stackrel{\left( 0, \sqrt{\epsilon/2} \right)}{\approx} Q.
    \end{equation}
\end{property}

Property~\ref{PROPERTY:E TO KL} and Property~\ref{property:kl to d} are strict in the sense that the reverse implications are not true in any form (i.e. closeness bounds on the right, no matter how small the parameters, do not even imply finiteness of the parameters on the left).  Also, we already mentioned that Property~\ref{PROPERTY:E TO KL} is tight.  Property~\ref{property:kl to d} is known to be tight up to a multiplicative constant.

The quantities arising in the above definitions and properties have concrete connections to inference.  Total variation distance precisely captures the error probability in a binary hypothesis test.  That is, one minus the total variation distance is the minimum sum of the two types of binary error probability.  Kullback-Leibler divergence precisely captures the asymptotic hypothesis testing error upon observing many independent observations \cite[Chapter~11]{cover:it}.  The strongest of these metrics, $(\epsilon, 0)$-closeness, has an interpretation in the Bayesian setting as a bound on the Bayes factor.  That is, the log-posterior-odds-ratio cannot change by more than $\epsilon$ due to the observation.  Finally, $(\epsilon,\delta)$-closeness is shown in \cite{kairouz:composition} to be precisely a piecewise linear constraint on the error region in a binary hypothesis test.  By inspection of that relationship, the following property is apparent (proven in Appendix~\ref{PROOF:E TO D}).
\begin{property}
\label{PROPERTY:E TO D}
    For any non-negative $\epsilon'<\epsilon$, let $\delta' = 1 - \frac{\left( e^{\epsilon'} + 1 \right)(1-\delta)}{e^{\epsilon}+1}$.
        \begin{equation}
        P \stackrel{\left( \epsilon, \delta \right)}{\approx} Q \quad \Longrightarrow \quad P \stackrel{\left( \epsilon', \delta'\right)}{\approx} Q.
    \end{equation}
\end{property}

Property~\ref{PROPERTY:E TO D} is the tightest possible trade-off between $\epsilon$ and $\delta$ with respect to $(\epsilon, \delta)$-closeness.  Notice that $\delta'>\delta$.  It is not possible for a larger $\delta$ to imply a smaller one, for any finite $\epsilon$ and $\epsilon'$.

\subsection{Differential Privacy}
\label{sec:dp}

The definition of $(\epsilon,\delta)$-DP in Definition~\ref{def:eddp} has been now made precise with Definition~\ref{def:neighbor} (neighbor) and Definition~\ref{def:ed close} ($(\epsilon, \delta)$-closeness).

We define $\epsilon$-DP and $(\delta)$-DP by setting either of the two parameters to zero.

\begin{definition}[$\epsilon$-Differential Privacy \cite{dwork:dp}]
A randomized mechanism $P_{Y|X^n}$ satisfies $\epsilon$-DP if it satisfies $(\epsilon, 0)$-DP.
\end{definition}

\begin{definition}[$(\delta)$-Differential Privacy]
A randomized mechanism $P_{Y|X^n}$ satisfies $(\delta)$-DP if it satisfies $(0, \delta)$-DP.
\end{definition}

Mutual-information differential privacy was defined in the introduction in Definition~\ref{def:midp}.

Finally, let us define one additional privacy metric based on Kullback-Leibler divergence, which we will call KL-DP.

\begin{definition}[KL Differential Privacy]
\label{def:kldp}
    A randomized mechanism $P_{Y|X^n}$ satisfies $\epsilon$-KL-DP if for all neighboring database instances $x^n$ and $\tilde{x}^n$
    \begin{equation}
        D \left( P_{Y|X^n = x^n} \middle\| P_{Y|X^n = \tilde{x}^n} \right) \leq \epsilon \text{ nats}.
    \end{equation}
\end{definition}

\subsection{Ordering of Privacy Metrics}

This work is about showing equivalence of privacy metrics.  In order to do so, we must define an ordering.

\begin{definition}[Stronger Privacy Metric]
    As a place-holder, take $\alpha$-DP and $\beta$-DP to represent two generic privacy guarantees with positive parameters $\alpha$ and $\beta$.  We say that $\alpha$-DP is stronger than $\beta$-DP, denoted as
    \begin{equation}
        \text{$\alpha$-DP} \succeq \text{$\beta$-DP},
    \end{equation}
    if for all $\beta'>0$ there exists an $\alpha'>0$ such that
    \begin{equation}
        \text{$\alpha$'-DP} \quad \Longrightarrow \quad \text{$\beta$'-DP}.
    \end{equation}
    If the parameters are vectors, then $\beta'>0$ and $\alpha'>0$ should be interpreted as inequalities on each coordinate.
\end{definition}

\begin{example}
\label{example:ordering}
    It is clear that $\epsilon$-DP $\succeq$ $(\epsilon,\delta)$-DP and $(\delta)$-DP $\succeq$ $(\epsilon,\delta)$-DP, since $\epsilon$-DP implies $(\epsilon,\delta)$-DP for any non-negative $\delta$, and likewise for $(\delta)$-DP, by definition.
    
    Also, $(\epsilon, \delta)$-DP $=$ $(\delta)$-DP by Property~\ref{PROPERTY:E TO D}.  Notice that even if we set $\epsilon'=0$, the quantity $\delta'$, as defined in the property, goes to zero as $\epsilon$ and $\delta$ go to zero.
\end{example}

\section{Main Result}
\label{section main result}

\subsection{Equivalence}

The emphasis of this work is the equivalence of mutual-information differential privacy with classical differential privacy.

\begin{thm}[Main Result]
\label{thm:main}
    \begin{equation}
    \label{eq:sandwich}
        \text{$\epsilon$-DP} \succeq \text{MI-DP} \succeq \text{$(\epsilon, \delta)$-DP}.
    \end{equation}
    Furthermore, if the cardinality of the database entries or the query response is bounded, then
    \begin{equation}
    \label{eq:mi equals ed}
        \text{MI-DP} = \text{$(\epsilon,\delta)$-DP},
    \end{equation}
    where the relationship $(\epsilon,\delta)$-DP $\succeq$ MI-DP is dependent on the cardinality bound
    \begin{equation}
        \min \left\{ |{\cal Y}|, \max_i |{\cal X}_i| \right\}.
    \end{equation}
\end{thm}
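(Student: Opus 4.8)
\noindent\emph{Proof plan.}
The plan is to establish three ordering relations---$\epsilon$-DP $\succeq$ MI-DP, MI-DP $\succeq$ $(\epsilon,\delta)$-DP, and, under the cardinality hypothesis, $(\epsilon,\delta)$-DP $\succeq$ MI-DP---the last of which, together with the equivalence $(\epsilon,\delta)$-DP $=$ $(\delta)$-DP of Example~\ref{example:ordering}, yields \eqref{eq:mi equals ed}. Two elementary facts are used throughout. First, for any index $i$ and database law $P_{X^n}$,
\begin{equation}
I(X_i;Y|\xmi)=\E_{\xmi}\,\E_{X_i|\xmi}\big[\,D(P_{Y|X_i,\xmi}\,\|\,P_{Y|\xmi})\,\big].
\end{equation}
Second, $(\epsilon,\delta)$-closeness is preserved under mixtures in either argument, so whenever the conditional output laws $\{P_{Y|x_i,x^{-i}}\}_{x_i}$ are pairwise $(\epsilon,\delta)$-close---which is exactly differential privacy restricted to the neighbors that differ in coordinate $i$---each is also $(\epsilon,\delta)$-close to their mixture $P_{Y|x^{-i}}:=P_{Y|\xmi=x^{-i}}$. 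For $\epsilon$-DP $\succeq$ MI-DP it suffices to show an $\epsilon$-DP mechanism is $\epsilon$-MI-DP: the conditionals are pairwise $(\epsilon,0)$-close, hence $(\epsilon,0)$-close to $P_{Y|x^{-i}}$, so by \eqref{eq:e0 restatement} the corresponding log-density is at most $\epsilon$ in magnitude, giving $D(P_{Y|x_i,x^{-i}}\|P_{Y|x^{-i}})\le\epsilon$ (indeed $\le\min\{\epsilon,\epsilon^2\}$, by Property~\ref{PROPERTY:E TO KL}); averaging and taking the supremum over $i$ and $P_{X^n}$ yields $\sup_{i,P_{X^n}}I(X_i;Y|\xmi)\le\epsilon$.

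For MI-DP $\succeq$ $(\epsilon,\delta)$-DP we exploit that the supremum over $P_{X^n}$ in Definition~\ref{def:midp} allows an adversarial prior. Given neighbors $x^n,\tilde x^n$ differing in coordinate $i$ with common part $x^{-i}$, take $\xmi$ deterministic equal to $x^{-i}$ and $X_i$ uniform over the two differing values; with $P_0,P_1$ the two conditional output laws and $M=\frac12(P_0+P_1)$, the expansion above gives $I(X_i;Y|\xmi)=\frac12 D(P_0\|M)+\frac12 D(P_1\|M)$, and Pinsker's inequality together with $\|P_0-M\|_{\tv}=\|P_1-M\|_{\tv}=\frac12\|P_0-P_1\|_{\tv}$ bounds this below by $\frac12\|P_0-P_1\|_{\tv}^2$. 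Hence $\epsilon$-MI-DP forces $\|P_0-P_1\|_{\tv}\le\sqrt{2\epsilon}$ for every neighboring pair, i.e.\ $(\sqrt{2\epsilon})$-DP, and since $\sqrt{2\epsilon}\to0$ as $\epsilon\to0$ this proves MI-DP $\succeq$ $(\delta)$-DP, which equals $(\epsilon,\delta)$-DP by Example~\ref{example:ordering}.

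For $(\epsilon,\delta)$-DP $\succeq$ MI-DP we may reduce to deterministic $\xmi$, since $I(X_i;Y|\xmi)\le\max_{x^{-i}}I(X_i;Y|\xmi=x^{-i})$, so the task is to bound $\sup_{P_{X_i}}I(X_i;Y)$ for the channel $x_i\mapsto P_{Y|x_i,x^{-i}}$ whose output laws are pairwise $(\epsilon,\delta)$-close. Writing $w_x=P_{X_i}(x)$, split $D(P_{Y|x,x^{-i}}\|P_{Y|x^{-i}})$ at the set where $dP_{Y|x,x^{-i}}/dP_{Y|x^{-i}}\le e^{\epsilon}$; this part contributes at most $\epsilon$. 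On the complementary set $B_x$ the density is at most $1/w_x$ because the mixture dominates $w_x$ times each component, while averaging the pairwise inequalities $P_{Y|x,x^{-i}}(B_x)\le e^\epsilon P_{Y|x',x^{-i}}(B_x)+\delta$ over $x'\neq x$, combined with $P_{Y|x,x^{-i}}(B_x)>e^\epsilon P_{Y|x^{-i}}(B_x)$, gives $P_{Y|x,x^{-i}}(B_x)\le\delta/(w_x(e^\epsilon-1))$. Thus $D(P_{Y|x,x^{-i}}\|P_{Y|x^{-i}})\le\epsilon+\frac{\delta}{w_x(e^\epsilon-1)}\ln\frac1{w_x}$ and
\begin{equation}
I(X_i;Y)\le\epsilon+\frac{\delta}{e^\epsilon-1}\sum_{x}\ln\frac1{w_x}.
\end{equation}
If $\max_i|\cx_i|\le c$, the sum has at most $c$ terms; splitting the inputs into ``heavy'' ones with $w_x\ge\frac{\delta}{e^\epsilon-1}$ and ``light'' ones (for which the trivial bound $D(P_{Y|x,x^{-i}}\|P_{Y|x^{-i}})\le\ln\frac1{w_x}$ is used in place of the above) and optimizing yields $I(X_i;Y)\le\epsilon+O\big(c\,\frac{\delta}{e^\epsilon-1}\ln\frac{e^\epsilon-1}{\delta}\big)$. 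If instead $|\cy|\le c$, a shorter route is to bound $\ln\frac{dP_{Y|x,x^{-i}}}{dP_{Y|x^{-i}}}(y)\le\epsilon+\delta/P_{Y|x^{-i}}(y)$ on atoms and interchange the sums over $x$ and $y$, giving directly $I(X_i;Y)\le\epsilon+\delta|\cy|$. In either case, first choosing $\epsilon$ small and then $\delta$ small enough (in a manner depending on $c=\min\{|\cy|,\max_i|\cx_i|\}$) makes the right-hand side at most any prescribed $\epsilon'>0$, so $(\epsilon,\delta)$-DP implies $\epsilon'$-MI-DP, which is $(\epsilon,\delta)$-DP $\succeq$ MI-DP.

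I expect $(\epsilon,\delta)$-DP $\succeq$ MI-DP to be the main obstacle. In contrast to $(\epsilon,0)$-closeness, $(\epsilon,\delta)$-closeness does not control Kullback--Leibler divergence---the additive $\delta$ can absorb an arbitrarily large, even infinite, likelihood ratio on a set of probability $\delta$, so pairwise $(\epsilon,\delta)$-closeness of the conditional output laws does not even make their pairwise KL divergences finite---and the coordinate-wise argument used for the other directions is therefore unavailable. The way around it is to use \emph{both} the mixture structure, which caps the density appearing in $D(P_{Y|x,x^{-i}}\|P_{Y|x^{-i}})$ by $1/w_x$, and the $(\epsilon,\delta)$-closeness of \emph{all} the conditionals against \emph{one another} (not merely against the mixture) to control $P_{Y|x,x^{-i}}(B_x)$; and then to carry out the two cardinality cases with separate estimates. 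Making the residual error vanish, rather than merely stay finite, is exactly what the cardinality hypothesis buys and is the most delicate part of the bookkeeping.
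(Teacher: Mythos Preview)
Your arguments for $\epsilon$-DP $\succeq$ MI-DP and MI-DP $\succeq$ $(\epsilon,\delta)$-DP are correct and coincide with the paper's: the first is the paper's Lemma~\ref{lem:e implies mi} (you merge the KL-DP intermediate step, but the content is the same), and the second is exactly the ``alternative simple argument'' the paper gives at the end of Appendix~\ref{PROOF:MI IMPLIES D} (Pinsker applied to the two-point prior, then the triangle inequality through the mixture).

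Your reverse direction, however, is a genuinely different route. The paper first reduces $(\epsilon,\delta)$-DP to $(\delta)$-DP via Property~\ref{PROPERTY:E TO D}, and then bounds $I(\tilde X;\tilde Y)$ by \emph{entropy continuity}: in the finite-$|\mathcal Y|$ case it invokes a Fano-type bound $|H(\tilde Y)-H(\tilde Y|\tilde X)|\le h(\delta)+\delta\ln(|\mathcal Y|-1)$, and in the finite-$|\mathcal X_i|$ case it proves a conditional-entropy continuity lemma (an Alicki--Fannes-style estimate via a Hahn decomposition and an auxiliary Bernoulli variable) to get $I(\tilde X;\tilde Y)\le 2h(\delta)+2\delta\ln(|\mathcal X_i|+1)$. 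You instead bound the KL divergences $D(P_{Y|x,x^{-i}}\|P_{Y|x^{-i}})$ directly, using two structural facts: the mixture lower bound $P_{Y|x^{-i}}\ge w_x P_{Y|x,x^{-i}}$, which caps the log-density at $\ln(1/w_x)$, and the pairwise $(\epsilon,\delta)$-closeness of the conditionals, which controls the measure of the ``bad'' set $B_x$. Your finite-$|\mathcal Y|$ estimate $I\le\epsilon+\delta|\mathcal Y|$ is particularly clean and entirely elementary. The paper's approach has the advantage of producing bounds that depend only on $\delta$ and the cardinality (since it passes through $(\delta)$-DP first), and of matching the erasure-channel lower bound of Example~\ref{eg:delta} up to a factor of two; your approach avoids the entropy-continuity machinery altogether but yields a messier estimate in the finite-$|\mathcal X_i|$ case (the heavy/light split with threshold $\delta/(e^\epsilon-1)$), and the resulting bound degenerates if one sends $\epsilon\to0$ before $\delta$ --- which is harmless for the ordering statement, since you are free to fix $\epsilon_0>0$ first and then shrink $\delta_0$, but worth noting.
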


Precise bounds for the privacy parameters are given in the three lemmas in Section~\ref{section proof main}.

\subsection{Related Work}

Using information theoretic measures to quantify the privacy guarantee of differential privacy is not a new idea. An upper bound of mutual information is shown in \cite{mcgregor:two} in a two-party differential privacy setting. Later this upper bound is used in \cite{de:lower} to get $I(X^n;Y) \leq 3 \epsilon n$. In \cite{alvim:min} and \cite{barthe:min}, min-entropy is considered rather than the usual Shannon entropy, and upper bounds are proven.  In fact, \cite[Corollary~1]{barthe:min} implies an ordering relationship similar to the first inequality of \eqref{eq:sandwich} but for min-entropy based information leakage with only a single database entry.  In \cite{wang-ying-zhang14}, a ``mutual information privacy'' metric is defined and studied.

These works have in common that they all consider the use of unconditional mutual information. This doesn't capture the structure in the definition of differential privacy and the bounds are limited to the mutual information between the whole database and the sanitized query output, with no focus on individual entries.  Needless to say, an equivalence is not established.

Some of the information theory literature bares resemblance to this work.  In \cite{bellare:semantic}, similar proof steps to this work are used to show an equivalence between semantic security and a mutual information constraint.  As in this work, there is a maximization over distributions of inputs to the randomized mechanism; however, conditional mutual information is not a part of that result, while it is a necessary ingredient here.  Also, the notion of $(\epsilon, \delta)$-closeness goes by the name of $E_{\gamma}$ distance in some of the information theory literature, such as \cite{polyanskiy2010channel} and \cite{liu2015resolvability}.  Specifically, $P \stackrel{(\epsilon, \delta)}{\approx} Q$ is equivalent to the pair of statements $E_{e^{\epsilon}} (P\|Q) \leq \delta$ and $E_{e^{\epsilon}} (Q\|P) \leq \delta$.

\subsection{Proof of Theorem~\ref{thm:main}}
\label{section proof main}

We prove \eqref{eq:sandwich} of Theorem~\ref{thm:main} by proving a stronger chain of inequalities:
\begin{equation}
    \text{$\epsilon$-DP} \stackrel{\text{(A)}}{\succeq} \text{KL-DP} \stackrel{\text{(B)}}{\succeq} \text{MI-DP} \stackrel{\text{(C)}}{\succeq} \text{$(\delta)$-DP} \stackrel{\text{(D)}}{=} \text{$(\epsilon, \delta)$-DP}.
\end{equation}

It is worth noting both (A) and (B) are in fact strict orderings ($\succ$)---the reverse implications do not hold, even if cardinality bounds are assumed.

We now state the components of the proof in separate lemmas.  Orderings (A) and (B) are the subject of Lemma~\ref{lem:e implies mi}, and ordering (C) is handled by Lemma~\ref{LEM:MI IMPLIES D}.  Equality (D) comes from Property~\ref{PROPERTY:E TO D}, as discussed in Example~\ref{example:ordering}.

\begin{lem}[Orderings (A) and (B)]
\label{lem:e implies mi}
    \begin{align}
        \text{$\epsilon$-DP} \quad &\Longrightarrow \quad \text{$\min \left\{\epsilon, \epsilon^2 \right\}$-KL-DP}, \label{eq:e implies kl} \\
        \text{$\epsilon$-KL-DP} \quad &\Longrightarrow \quad \text{$\epsilon$-MI-DP}. \label{eq:kl implies mi}
    \end{align}
    Therefore,
    \begin{equation}
        \text{$\epsilon$-DP} \quad \Longrightarrow \quad \text{$\min \left\{\epsilon, \epsilon^2 \right\}$-MI-DP}. \label{eq:e implies mi}
    \end{equation}
\end{lem}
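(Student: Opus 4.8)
The plan is to prove the two implications \eqref{eq:e implies kl} and \eqref{eq:kl implies mi} on their own and then compose them to obtain \eqref{eq:e implies mi}.

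For \eqref{eq:e implies kl} there is nothing to do beyond quoting Property~\ref{PROPERTY:E TO KL}. If $P_{Y|X^n}$ satisfies $\epsilon$-DP then, by Definition~\ref{def:eddp}, $P_{Y|X^n=x^n}\stackrel{(\epsilon,0)}{\approx}P_{Y|X^n=\tilde x^n}$ for every neighboring pair $x^n,\tilde x^n$; Property~\ref{PROPERTY:E TO KL} converts each such closeness statement into $D(P_{Y|X^n=x^n}\,\|\,P_{Y|X^n=\tilde x^n})\le\min\{\epsilon,\epsilon^2\}$ nats, which is exactly $\min\{\epsilon,\epsilon^2\}$-KL-DP by Definition~\ref{def:kldp}.

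For \eqref{eq:kl implies mi} the key tool is the variational characterization of mutual information: for any $x^{-i}$ and any fixed output distribution $Q_Y$ on $\cy$,
\begin{equation}
    I\!\left(X_i;Y\,\middle|\,\xmi=x^{-i}\right)\;\le\;\E_{X_i\mid\xmi=x^{-i}}\!\left[D\!\left(P_{Y\mid X_i,\,\xmi=x^{-i}}\,\middle\|\,Q_Y\right)\right],
\end{equation}
which is the usual consequence of the golden-formula identity $D(P_{Y|X}\|Q_Y\mid P_X)=I(X;Y)+D(P_Y\|Q_Y)$ and is tight at $Q_Y=P_{Y\mid\xmi=x^{-i}}$. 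I would fix an arbitrary reference symbol $x_i^{\ast}\in{\cal X}_i$ and, for each $x^{-i}$, instantiate the bound with $Q_Y$ equal to the mechanism's output $P_{Y|X^n=\hat x^n}$ on the database $\hat x^n$ that agrees with $x^{-i}$ off coordinate $i$ and carries $x_i^{\ast}$ in coordinate $i$. For every value $x_i$, the database $(x_i,x^{-i})$ and $\hat x^n$ either coincide or are neighbors, so $\epsilon$-KL-DP forces $D(P_{Y|X^n=(x_i,x^{-i})}\,\|\,P_{Y|X^n=\hat x^n})\le\epsilon$ for \emph{every} $x_i$. Averaging over $X_i$ gives $I(X_i;Y\mid\xmi=x^{-i})\le\epsilon$, averaging over $\xmi$ gives $I(X_i;Y\mid\xmi)\le\epsilon$, and since this bound is uniform in $i$ and in $P_{X^n}$ the supremum in Definition~\ref{def:midp} is also $\le\epsilon$; hence $\epsilon$-MI-DP.

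Composing the two implications gives $\epsilon$-DP $\Rightarrow$ $\min\{\epsilon,\epsilon^2\}$-KL-DP $\Rightarrow$ $\min\{\epsilon,\epsilon^2\}$-MI-DP, which is \eqref{eq:e implies mi}. I expect the only delicate points to be measure-theoretic bookkeeping: justifying the variational inequality (or the golden-formula identity) when $\cy$ and the ${\cal X}_i$ are general measurable spaces, and being careful that conditionals such as $P_{Y\mid X_i,\,\xmi=x^{-i}}$ are taken only on the support of $X_i$ while $Q_Y$ is defined directly from the mechanism. Neither is a real obstacle: the key per-symbol divergence bound holds pointwise in $x_i$ rather than almost everywhere, and the absolute continuity the identity presumes is part of what $\epsilon$-KL-DP asserts (finite KL divergence). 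The one genuinely structural ingredient is that overwriting the single coordinate $X_i$ by a fixed symbol is one step in the neighbor graph, which is exactly what lets a single reference distribution $Q_Y$ dominate $P_{Y\mid X_i=x_i,\,\xmi=x^{-i}}$ for all $x_i$ at once---the same conditioning-on-$\xmi$ structure that makes MI-DP carry the strong-adversary assumption.
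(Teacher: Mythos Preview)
Your proof is correct and very close to the paper's own. For \eqref{eq:e implies kl} both you and the paper simply invoke Property~\ref{PROPERTY:E TO KL}. For \eqref{eq:kl implies mi} the paper writes $I(X_i;Y|\xmi)=\E\bigl[D(P_{Y|X^n=\bar X^n}\,\|\,P_{Y|\xmi=\bar X^{-i}})\bigr]$, i.e.\ it takes the reference distribution to be the true conditional marginal $P_{Y|\xmi=x^{-i}}$, and then uses convexity of $D(\cdot\|\cdot)$ in its second argument (Jensen) to replace that marginal by a mixture of neighbor outputs and obtain the pairwise bound $\le\epsilon$. You instead pick a single fixed neighbor's output $Q_Y=P_{Y|X^n=\hat x^n}$ as the reference in the variational (golden-formula) upper bound, so every term in the expectation is already a pairwise neighbor divergence and no Jensen step is needed. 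The two choices of reference are the only real difference; your route is marginally more elementary, the paper's ties the argument to the ``radius of the information ball'' picture, and both arrive at the same per-$x_i$ bound.
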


\begin{proof}
    The first statement, \eqref{eq:e implies kl}, is established by Property~\ref{PROPERTY:E TO KL}.  Both $\epsilon$-DP and KL-DP are defined the same way in terms of neighboring database instances.
    
    The second statement, \eqref{eq:kl implies mi}, is best understood through the geometric interpretation of capacity as the radius of the information ball \cite[Theorem~13.1.1]{cover:it}.  The radius cannot be more than the maximum of pairwise distances.  However, we will not directly use that machinery here.  Instead, consider the following direct proof.
    
    Start by assuming that the randomized mechanism $P_{Y|X^n}$ satisfies $\epsilon$-KL-DP.  Let $i \in \{1,\ldots,n\}$ and $P_{X^n}$ be arbitrary.  For notational clarity, let $\bar{X}^n \sim P_{X^n}$, and begin with a representation of conditional mutual information for a general distribution in terms of Kullback-Leibler divergence:
    \begin{equation}
        I(X_i;Y|X^{-i}) = \mathbb{E} \left[ D \left( P_{Y|X^n = \bar{X}^n} \middle\| P_{Y|X^{-i}=\bar{X}^{-i}} \right) \right]
    \end{equation}
    
    Now we bound $D \left( P_{Y|X^n=x^n} \middle\| P_{Y|X^{-i}=x^{-i}} \right)$ for each instance $x^n$.  Fix $x^n$ arbitrarily, and let $\tilde{X} \sim P_{X_i|X^{-i}=x^{-i}}$.  Consider,
    \begin{equation}
        P_{Y|X^{-i}=x^{-i}} = \mathbb{E} \left[ P_{Y|X_i = \tilde{X},X^{-i}=x^{-i}} \right].
    \end{equation}
    
    Therefore, by Jensen's inequality, and using the fact that $D(\cdot \| \cdot)$ is convex in the second argument, we conclude,
    \begin{align}
        &D \left( P_{Y|X^n = x^n} \middle\| P_{Y|X^{-i}=x^{-i}} \right) \nonumber \\
        &= D \left( P_{Y|X^n = x^n} \middle\| \mathbb{E} \left[ P_{Y|X_i=\tilde{X},X^{-i}=x^{-i}} \right] \right) \nonumber \\
        &\leq \mathbb{E} \left[ D \left( P_{Y|X^n = x^n} \middle\| P_{Y|X_i=\tilde{X},X^{-i}=x^{-i}} \right) \right] \nonumber \\
        &\leq \epsilon \text{ nats},
    \end{align}
    where the last inequality is due to the fact that any two databases that agree on $X^{-i}$ are neighbors.
\end{proof}

\begin{lem}[Ordering (C)]
\label{LEM:MI IMPLIES D}
    \begin{equation}
        \text{$\epsilon$-MI-DP} \quad \Longrightarrow \quad \text{$\left( 0, \sqrt{2 \epsilon} \right)$-DP}. \label{eq:mi to d simple}
    \end{equation}
    In fact, the tightest possible statement of this form is
    \begin{equation}
        \text{$\epsilon$-MI-DP} \quad \Longrightarrow \quad \text{$\left( 0, \delta' \right)$-DP}, \label{eq:mi to d tight}
    \end{equation}
    with $\delta' = 1 - 2 h^{-1} (\ln 2 - \epsilon)$, where $h^{-1}$ is the inverse of the increasing part of the binary entropy function in units of nats.  This formula holds for $\epsilon \in [0, \ln 2]$.  For $\epsilon > \ln 2$, the implication becomes $(1)$-DP, which is vacuous.
    
    The claim in \eqref{eq:mi to d simple} is looser than that in \eqref{eq:mi to d tight} but asymptotically tight for small $\epsilon$.
\end{lem}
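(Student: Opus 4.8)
The plan is to prove both implications by reducing, for each pair of neighbouring databases, to a statement about a single binary-input channel, and then solving a one-dimensional extremal problem relating that channel's capacity to total variation distance. First I would set up the reduction. Fix neighbours $x^n$ and $\tilde x^n$ differing only in coordinate $i$, write $P_0 = P_{Y|X^n=x^n}$, $P_1 = P_{Y|X^n=\tilde x^n}$ and $\delta = \|P_0-P_1\|_{\tv}$, and consider the database distribution that is deterministic, equal to the common value $x^{-i}$, on every coordinate $j\neq i$ while putting mass $p$ on $X_i=x_i$ and $1-p$ on $X_i=\tilde x_i$. Since $X^{-i}$ is then almost surely constant, $I(X_i;Y\mid X^{-i})=I(X_i;Y)$, and the induced channel $X_i\mapsto Y$ is exactly the binary channel with output laws $P_0,P_1$ at input bias $p$. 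Because $\epsilon$-MI-DP must hold for this distribution and every $p$, the capacity of that binary channel is at most $\epsilon$; it remains to bound $\delta$ by a function of the capacity, uniformly over all pairs $(P_0,P_1)$.

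For the loose bound \eqref{eq:mi to d simple} I would just take $p=\tfrac12$, so that the mutual information equals the Jensen--Shannon divergence $\tfrac12 D(P_0\|M)+\tfrac12 D(P_1\|M)$ with $M=\tfrac12(P_0+P_1)$. Since $\|P_0-M\|_{\tv}=\|P_1-M\|_{\tv}=\tfrac{\delta}{2}$, Pinsker's inequality in nats gives $D(P_j\|M)\ge\tfrac{\delta^2}{2}$, hence $\epsilon\ge\tfrac{\delta^2}{2}$ and $\delta\le\sqrt{2\epsilon}$, which is $(0,\sqrt{2\epsilon})$-DP.

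For the tight bound \eqref{eq:mi to d tight} I would pick a set $A$ with $P_0(A)-P_1(A)=\delta$ (swapping $P_0,P_1$ if needed) and binarise the output by $Z=\mathbf{1}[Y\in A]$; the data-processing inequality then lower-bounds the capacity by $C(a,a-\delta)$, the capacity of the binary-input binary-output channel with rows $\mathrm{Bern}(a)$ and $\mathrm{Bern}(a-\delta)$, where $a=P_0(A)\in[\delta,1]$. The hard part will be showing that, over $a\in[\delta,1]$, this capacity is minimised at the symmetric point $a=\tfrac{1+\delta}{2}$, i.e.\ by the BSC with crossover probability $\tfrac{1-\delta}{2}$, whose capacity is $\ln 2-h\!\left(\tfrac{1-\delta}{2}\right)$. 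This is a single-variable optimisation: the relabelling symmetry $C(a,a-\delta)=C(1-a+\delta,1-a)$ (swapping both the input and output alphabets) pins $a=\tfrac{1+\delta}{2}$ as the only candidate interior critical point, after which one checks that it is indeed the minimum. Granting this, $\epsilon\ge\ln 2-h\!\left(\tfrac{1-\delta}{2}\right)$, so $h\!\left(\tfrac{1-\delta}{2}\right)\ge\ln 2-\epsilon$; since $\tfrac{1-\delta}{2}\le\tfrac12$ is on the increasing branch of $h$, for $\epsilon\le\ln 2$ we get $\tfrac{1-\delta}{2}\ge h^{-1}(\ln 2-\epsilon)$, i.e.\ $\delta\le 1-2h^{-1}(\ln 2-\epsilon)=\delta'$, while for $\epsilon>\ln 2$ the bound $\delta\le1$ is vacuous. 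Since this holds for every neighbour pair, $(0,\delta')$-DP follows.

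Finally, to check that \eqref{eq:mi to d tight} is tight, I would exhibit the extremal mechanism: take $n=1$ (or, for general $n$, let $Y$ depend on $X^n$ only through $X_i$) with $X_i$ and $Y$ binary and $P_{Y|X_i}$ equal to a BSC of crossover $\tfrac{1-\delta'}{2}$. Then $\sup_{P_{X_i}}I(X_i;Y)$ is the channel capacity $\ln 2-h\!\left(\tfrac{1-\delta'}{2}\right)=\epsilon$, so $\epsilon$-MI-DP holds (in the general case the Markov chain $Y-X_i-X^{-i}$ gives $I(X_i;Y\mid X^{-i})\le I(X_i;Y)$), yet the databases with $X_i=0$ and $X_i=1$ are neighbours at TV distance exactly $\delta'$. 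The asymptotic claim then follows from $h\!\left(\tfrac12+u\right)=\ln 2-2u^2+O(u^3)$, which gives $\delta'=1-2h^{-1}(\ln 2-\epsilon)\sim\sqrt{2\epsilon}$ as $\epsilon\to0$ and reconciles \eqref{eq:mi to d tight} with \eqref{eq:mi to d simple}.
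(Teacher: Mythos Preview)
Your proposal is correct and follows essentially the same route as the paper: restrict the database distribution to be supported on the two neighbours (so $X^{-i}$ is deterministic and $I(X_i;Y|X^{-i})=I(X_i;Y)$), binarise the output via an indicator of an arbitrary measurable set, and reduce to the capacity--versus--total-variation tradeoff for binary-input binary-output channels, which is extremised by the BSC. Your Pinsker argument for \eqref{eq:mi to d simple} and your tightness example are also the same as the paper's.

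The one place to tighten is your justification that the BSC minimises capacity at fixed total variation. The symmetry $C(a,a-\delta)=C(1-a+\delta,1-a)$ does \emph{not} by itself pin $a=\tfrac{1+\delta}{2}$ as the only interior critical point; it only says critical points come in symmetric pairs. The paper closes this cleanly by convexity: channel capacity is convex in the transition matrix, so averaging the channel $(a,a-\delta)$ with its input-and-output-flipped copy $(1-a+\delta,1-a)$ yields a BSC with the \emph{same} total variation $|a-b|$ but capacity no larger than the original. Equivalently, since the map $a\mapsto(\mathrm{Bern}(a),\mathrm{Bern}(a-\delta))$ is affine, $a\mapsto C(a,a-\delta)$ is convex, and then your symmetry immediately gives the minimum at the midpoint. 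Adding this one observation makes your argument complete.
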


\begin{proof}
    The essence of this claim is found in the binary case, with a binary database and a binary query response.  We show this reduction first.
    
    Start by assuming that the randomized mechanism $P_{Y|X^n}$ satisfies $\epsilon$-MI-DP.  Consider an arbitrary pair of neighboring database instances $x^n$ and $\tilde{x}^n$, and let $i$ be the location where they differ.  Denote by $\Delta_{x^n,\tilde{x}^n}$ the subset of probability distributions over the space of databases ${\cal D}$ that only put positive mass on $x^n$ and $\tilde{x}^n$.  Therefore, all distributions in $\Delta_{x^n,\tilde{x}^n}$ are binary, and $X^{-i}$ is deterministic with respect to any of them.

    Also, let $A$ be an arbitrary measurable subset of ${\cal Y}$.  Consider the indicator function
    \begin{equation}
        B(y) =
        \begin{cases}
            1, & y \in A, \\
            0, & y \notin A. \\
        \end{cases}
    \end{equation}
    The random variable $B$ is the binary function $B(Y)$.
    
    \begin{align}
        \max_{P_{X^n} \in \Delta_{x^n, \tilde{x}^n}} I (X_i; B) &\stackrel{(a)}{\leq} \max_{P_{X^n} \in \Delta_{x^n, \tilde{x}^n}} I (X_i; Y) \nonumber \\
        &\stackrel{(b)}{=} \max_{P_{X^n} \in \Delta_{x^n, \tilde{x}^n}} I (X_i; Y | X^{-i}) \nonumber \\
        &\leq \sup_{P_{X^n}} I (X_i; Y | X^{-i}) \nonumber \\
        &\stackrel{(c)}{\leq} \epsilon \text{ nats}, \label{eq:reduction to binary}
    \end{align}
    where (a) is due to the data processing inequality, (b) comes from the fact that $X^{-i}$ is deterministic for all distributions in $\Delta_{x^n, \tilde{x}^n}$, and (c) is by assumption of $\epsilon$-MI-DP.
    
    To summarize, we have arrived at a binary input and binary output randomized mechanism $P_{B|X_i}$, where $X_i \in \{ x_i, \tilde{x}_i \}$, defined by
    \begin{align}
        P_{B|X_i = x_i}(\{1\}) &= P_{Y|X^n = x^n}(A), \\
        P_{B|X_i = \tilde{x}_i}(\{1\}) &= P_{Y|X^n = \tilde{x}^n}(A).
    \end{align}
    This mechanism is shown in \eqref{eq:reduction to binary} to satisfy $\epsilon$-MI-DP.  Also, since $A$, $x^n$, and $\tilde{x}^n$ were chosen arbitrarily, any $(\delta)$-DP claim that can be made about $P_{B|X_i}$ must also hold for $P_{Y|X^n}$.
    
    In Appendix~\ref{PROOF:MI IMPLIES D}, we complete the proof by showing that Lemma~\ref{LEM:MI IMPLIES D} holds for all randomized mechanisms with a binary input and binary output, and that the characterization is tight.
    
    A more complete characterization is also possible, of the form
    \begin{equation}
        \text{$\epsilon$-MI-DP} \quad \Longrightarrow \quad \text{$\left( \epsilon', \delta' \right)$-DP}, \label{eq:mi to ed}
    \end{equation}
    for a particular set of values $(\epsilon', \delta')$ which, among other things, has the property that $\delta'$ must be greater than some positive threshold which depends on $\epsilon$, and as $\delta'$ approaches this threshold, $\epsilon'$ must go to infinity.  This characterization is also arrived at by first reducing to the binary case as we have done above.  However, a description of the trade-off is too unwieldy for this discussion.
\end{proof}

We prove \eqref{eq:mi equals ed} of Theorem~\ref{thm:main} with the following claim.  The proof is given in Appendix~\ref{PROOF:D TO MI}.

\begin{lem}[Reverse direction]
\label{LEM:D IMPLIES MI}
    If $|{\cal X}_i|$ is finite for all $i \in \{X_1,\dots,X_n\}$, or if $|{\cal Y}|$ is finite, then
    \begin{equation}
        \text{$(\delta)$-DP} \quad \Longrightarrow \quad \text{$\epsilon'$-MI-DP},
    \end{equation}
    where, for any $\delta \in [0,1]$,
    \begin{equation}
        \epsilon' = 2 h(\delta) + 2 \delta \ln \left( \min \left\{ |{\cal Y}|, \max_i |{\cal X}_i| + 1 \right\} \right).
    \end{equation}
\end{lem}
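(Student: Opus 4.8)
The plan is to bound the conditional mutual information $I(X_i;Y|X^{-i})$ directly, using the fact that $(\delta)$-DP forces the total variation distance between $P_{Y|X^n=x^n}$ and $P_{Y|X^n=\tilde x^n}$ to be at most $\delta$ for every pair of neighbors. First I would fix an arbitrary index $i$, an arbitrary distribution $P_{X^n}$, and condition on a value $x^{-i}$ of $X^{-i}$; it suffices to bound $I(X_i;Y\mid X^{-i}=x^{-i})$ uniformly. Having conditioned on $x^{-i}$, any two realizations of $X_i$ give rise to neighboring databases, so all the conditional output distributions $\{P_{Y|X_i=x_i,X^{-i}=x^{-i}}\}_{x_i}$ lie within total variation $\delta$ of one another, and hence each is within $\delta$ of the mixture $P_{Y|X^{-i}=x^{-i}}$. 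The task reduces to: if a family of distributions is pairwise $\delta$-close in total variation (equivalently each is $\delta$-close to their average under some mixing law $P_{X_i|X^{-i}=x^{-i}}$), how large can the mutual information between the mixing variable and a sample be?

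The key step is a coupling/decomposition argument. For each value $x_i$, write $P_{Y|X_i=x_i,X^{-i}=x^{-i}}$ as a mixture $(1-\delta)\,R + \delta\,Q_{x_i}$, where $R$ is a fixed distribution (e.g. the common overlap, or $P_{Y|X^{-i}=x^{-i}}$ itself after a suitable renormalization) not depending on $x_i$, and $Q_{x_i}$ is some residual distribution; such a decomposition exists precisely because each conditional is within $\delta$ in total variation of the reference. Introduce an auxiliary Bernoulli$(\delta)$ ``switch'' variable $E$: given $X_i=x_i$, with probability $1-\delta$ draw $Y\sim R$ (independent of $X_i$) and with probability $\delta$ draw $Y\sim Q_{x_i}$. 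Then I would expand
\begin{align}
I(X_i;Y\mid X^{-i}=x^{-i}) &\le I(X_i;Y,E\mid X^{-i}=x^{-i}) \nonumber \\
&= I(X_i;E) + I(X_i;Y\mid E) \nonumber \\
&\le 0 + \delta\, I(X_i;Y\mid E=1),
\end{align}
using that $E$ is independent of $X_i$, that $I(X_i;Y\mid E=0)=0$ since $Y\sim R$ is then independent of $X_i$, and that $\Pr(E=1)=\delta$. It remains to bound $I(X_i;Y\mid E=1)$, which is at most $H(Y\mid E=1)$ on the output side and at most $H(X_i)$ on the input side; this is exactly where the cardinality bound enters — $I(X_i;Y\mid E=1)\le \ln\min\{|{\cal Y}|,\max_i|{\cal X}_i|\}$ (or $\max_i|{\cal X}_i|+1$ after accounting for the conditioning overhead). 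Combining, $I(X_i;Y\mid X^{-i}=x^{-i}) \le \delta\ln(\cdot)$, and then I would carefully account for the cost of having introduced $E$ and of $X_i$ possibly being non-uniform; folding in the entropy $H(E)\le h(\delta)$ terms twice (once per direction of the total-variation bound, or once for the switch and once for a symmetric bound) yields the stated $\epsilon' = 2h(\delta) + 2\delta\ln(\min\{|{\cal Y}|,\max_i|{\cal X}_i|+1\})$.

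The main obstacle I anticipate is making the mixture decomposition uniform over the mixing distribution $P_{X_i|X^{-i}=x^{-i}}$ and handling the case where $|{\cal X}_i|$ is infinite but $|{\cal Y}|$ is finite (and vice versa) with a single clean argument — in particular, when only $|{\cal Y}|$ is finite one must bound $I(X_i;Y\mid E=1)$ by $\ln|{\cal Y}|$ rather than by any input-alphabet quantity, and when only the input alphabets are finite one bounds it by $\ln(\max_i|{\cal X}_i|)$, which forces the $\min$ in the final expression. A secondary technical point is justifying that the per-$x^{-i}$ bound, being uniform in $x^{-i}$, gives the same bound after taking $\mathbb{E}_{X^{-i}}$ and then the supremum over $P_{X^n}$; this is immediate once the conditional bound is truly uniform, but it requires care that the decomposition constant $\delta$ is the global privacy parameter and not something distribution-dependent. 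I would also double-check the constants by testing the binary symmetric case against Lemma~\ref{LEM:MI IMPLIES D} to confirm the factor-of-two and the ``$+1$'' are not slack artifacts that could be removed, or, if they are, to note that the lemma is stated with room to spare.
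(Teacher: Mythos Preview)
Your overall strategy is sound and close in spirit to the paper's: condition on $x^{-i}$, observe that all conditional output laws are pairwise $\delta$-close in total variation, introduce an auxiliary Bernoulli ``switch,'' and bound the residual information by a log-cardinality. The gap is in the decomposition step. You assert that each $P_{Y\mid X_i=x_i,X^{-i}=x^{-i}}$ can be written as $(1-\delta)R + \delta Q_{x_i}$ for a \emph{single} $R$ independent of $x_i$; this would force $(1-\delta)R \le \inf_{x_i} P_{Y\mid X_i=x_i,X^{-i}=x^{-i}}$ pointwise. But that infimum can have total mass strictly below $1-\delta$ even when every pair is $\delta$-close: on three points take $P_1=(\tfrac13+\tfrac\delta2,\tfrac13-\tfrac\delta2,\tfrac13)$ and its two cyclic shifts; each pair has total variation exactly $\delta$, yet $\min_j P_j$ has mass $1-\tfrac{3\delta}{2}$. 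With more values of $x_i$ the common overlap can be driven arbitrarily small, so no common $R$ with weight $1-\delta$ exists, and the chain $I(X_i;E)=0$, $I(X_i;Y\mid E=0)=0$ collapses.

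The repair is simple and actually sharper than the paper's bound: fix any $x_i^\star$, and for each $x_i$ take the \emph{maximal coupling} of $P_{Y\mid X_i=x_i,X^{-i}=x^{-i}}$ with the fixed reference $P_{Y\mid X_i=x_i^\star,X^{-i}=x^{-i}}$, producing $(Y,Y')$ with $Y'\perp X_i$ and $E=\mathbf 1\{Y\ne Y'\}$ satisfying $\Pr(E=1)\le\delta$. Then $I(X_i;Y\mid X^{-i}=x^{-i})\le I(X_i;Y')+I(X_i;E\mid Y')+I(X_i;Y\mid Y',E)\le 0+h(\delta)+\delta\ln\min\{|\mathcal Y|,|\mathcal X_i|\}$, so no ``folding in $h(\delta)$ twice'' is needed. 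For comparison, the paper does not attempt a uniform coupling at all: it treats the two cardinality hypotheses separately, using continuity of $H(\tilde Y)$ (via Fano-type bounds) when $|\mathcal Y|<\infty$, and, when only $\max_i|\mathcal X_i|<\infty$, first showing $\|P_{\tilde X,\tilde Y}-P_{\tilde X}P_{\tilde Y}\|_{TV}\le\delta$ and then invoking an Alicki--Fannes-style continuity lemma for $H(\tilde X\mid\tilde Y)$, which is where the factor of $2$ in the stated $\epsilon'$ originates.
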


Slightly tighter bounds can be found in \eqref{eq:d to mi tighter y} and \eqref{eq:d to mi tighter x} of the proof.  Although these bounds may have some looseness, the following example shows that they get roughly within a factor of two of the correct scaling for large cardinalities.

\begin{example}[Erasure channel]
\label{eg:delta}
Consider a database with only one entry, $X_1$.  Let ${\cal X}_1 = [N]$ and ${\cal Y} = [N] \cup \{0\}$.  Define
\begin{equation}
    P_{Y|X_1=x_1} = 
    \begin{cases}
        1 - \delta, & y = 0, \\
        \delta, & y = x_1, \\
        0, & \text{otherwise}.
    \end{cases}
\end{equation}

This randomized mechanism is usually referred to as an erasure channel, where the output $Y=0$ is considered an erasure.  It is known that the capacity of this channel is
\begin{equation}
    C = \delta \log N,
\end{equation}
where $N = |{\cal X}_1| = |{\cal Y}|-1$.  This implies that there exists a distribution of the database (in this case, the uniform distribution for $X_1 \in {\cal X}_1$) such that
\begin{align}
    I(X_1;Y|X^{-1}) &= I(X_1;Y) \nonumber \\
    &= \delta \log |{\cal X}_1| \nonumber \\
    &= \delta \log (|{\cal Y}| - 1).
\end{align}
\end{example}

\section{Properties of Diff. Privacy}
\label{sec:midp}

Now that we have MI-DP as an equivalent metric of privacy, we explore the insights that this brings and simple proofs of properties about privacy.

The following are three basic and well-known properties of mutual information:

\begin{property}
\label{property:mi order independent}
    If $U$ is independent of $W$, then
    \begin{equation}
        I(U;V|W) \geq I(U;V).
    \end{equation}
\end{property}

\begin{property}
\label{property:mi order markov}
    If $U$, $V$, and $W$ form a Markov chain $U - V - W$, meaning that $U$ and $W$ are conditionally independent given $V$, then
    \begin{equation}
        I(U;V|W) \leq I(U;V).
    \end{equation}
\end{property}

\begin{property}[Chain rule]
\label{property:mi chain rule}
    \begin{equation}
        I(U;V,W) = I(U;V) + I(U;W|V).
    \end{equation}
\end{property}

We will use these three properties (sometimes conditioned on other random variables) to make claims about MI-DP.

\subsection{The Strong Adversary Assumption}
\label{section:strong adversary}

We refer to a strong adversary as one who knows the entire database except for any one entry $X_i$.  Differential privacy is implicitly designed as a protection against further information leakage to this adversary.  The definition of MI-DP, now shown to be equivalent, makes this attribute explicit by conditioning on the remainder of the database and bounding $I(X_i;Y|X^{-i})$.  But how much information does the sanitized output $Y$ leak to an adversary with no prior knowledge?

In \cite{kifer:nfl}, this is referred to as {\em evidence of participation}.  In the mutual information context, this may be measured by the unconditional mutual information $I(X_i;Y)$.  It is pointed out in \cite{kifer:nfl} that if the entries of the database are independent, the evidence of participation can be protected properly by differential privacy.  This claim is straightforward using MI-DP in light of Property~\ref{property:mi order independent}.

\begin{cor}[Independent Data]
If $\{X_i\}_{i = 1}^n$ are mutually independent and $P_{Y|X^n}$ satisfies $\epsilon$-MI-DP, then
\begin{equation}
    \sup_{i,P_{X_i}P_\xmi} I(X_i;Y) \leq \sup_{i,P_{X_i}P_\xmi} I(X_i;Y|\xmi) \leq \epsilon \text{ nats}.
\end{equation}
\end{cor}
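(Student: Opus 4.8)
The plan is to establish the two inequalities separately, each being essentially a one-line application of an earlier fact. The right-hand inequality just unpacks the definition of MI-DP. The supremum $\sup_{i,P_{X_i}P_\xmi} I(X_i;Y|\xmi)$ is taken over a restricted family of joint distributions on the database, namely those for which $X_i$ is independent of the remaining entries $\xmi$ (with prescribed marginals $P_{X_i}$ and $P_\xmi$). This family is a subset of the set of all $P_{X^n}$ over which Definition~\ref{def:midp} takes its supremum, so a fortiori $\sup_{i,P_{X_i}P_\xmi} I(X_i;Y|\xmi) \le \sup_{i,P_{X^n}} I(X_i;Y|\xmi) \le \epsilon$ nats, the last step being the hypothesis that $P_{Y|X^n}$ satisfies $\epsilon$-MI-DP.

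For the left-hand inequality I would invoke Property~\ref{property:mi order independent} with $U = X_i$, $V = Y$, and $W = \xmi$. For any product distribution $P_{X_i}P_\xmi$ the entry $X_i$ is independent of $\xmi$, so the property yields $I(X_i;Y|\xmi) \ge I(X_i;Y)$ pointwise over this family; taking $\sup_{i,P_{X_i}P_\xmi}$ of both sides preserves the inequality and gives the claim. Chaining the two bounds completes the corollary, and the mutual-independence hypothesis on $\{X_i\}_{i=1}^n$ guarantees that the true database law is itself one of the product distributions over which the outer supremum ranges.

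There is no real obstacle here; the only point requiring a moment's care is to observe that both ingredients --- the monotonicity in Property~\ref{property:mi order independent} and the MI-DP bound --- are applied termwise to the same restricted family of product distributions, so that taking the supremum at the end is legitimate. It is worth remarking that, via Theorem~\ref{thm:main}, the same conclusion can be rephrased for $(\epsilon,\delta)$-DP, recovering the observation of \cite{kifer:nfl} that differential privacy does control the ``evidence of participation'' $I(X_i;Y)$ precisely in the independent-data regime, while saying nothing about it once correlations are present.
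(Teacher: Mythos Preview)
Your proposal is correct and follows exactly the approach the paper itself indicates: the paper presents this corollary as ``straightforward using MI-DP in light of Property~\ref{property:mi order independent}'' without further elaboration, and your write-up simply spells out the two one-line steps (restricting the supremum to product distributions for the right inequality, and invoking Property~\ref{property:mi order independent} pointwise for the left). There is nothing to add or correct.
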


On the other hand, it is often the case that entries of a database are correlated.  Differential privacy does not provide a strong guarantee about the evidence of participation in general.  Consider the following familiar example:

\begin{example}[Correlated database]
\label{eg:dep}
Consider a database with $n$ binary entries.  A data curator decides to release the mean of all entries and chooses the Laplace mechanism.  Noise with distribution $\rm{Lap}(\frac{1}{n\epsilon})$ is added to the sample mean to ensure $\epsilon$-DP (also $\epsilon$-MI-DP by Lemma~\ref{lem:e implies mi}).

Now suppose all database entries are in fact equal to each other (maximally correlated).  Let $X \sim \rm{Bern}(0.5)$ and $X_i = X$ for all $i \in \{1,\ldots,n\}$.  For large enough $n$, the noise added is negligible, and the binary value of the sample mean can be estimated with high accuracy, revealing each individual entry.  In terms of mutual information, $I(X_i;Y) \approx 1$ bit for each $i$ even while $I(X_i;Y|X^{-i}) = 0$ because $H(X_i|X^{-i})=0$.
\end{example}

\subsection{Composition}

Among the most important properties of differential privacy is composability.  This states that a collection of queries, each satisfying differential privacy, collectively satisfies differential privacy with a parameter scaled proportional to the number of queries.

A great deal of effort has been made in deriving tight composition theorems for differential privacy.  A straight-forward composition theorem can be found in \cite{dwork:epsilondelta}.  More intricate trade-offs can be found in \cite{dwork-rothblum-vadhan10} and \cite{kairouz:composition}, with the latter establishing a tight characterization.

The following claims for MI-DP mirror those found in \cite{mcsherry:comp} for $(\epsilon,0)$-DP and are in fact tight.

\begin{cor}[Conditionally independent queries]
\label{cor:conditionally independent}
    If several query responses $\{Y_1,\ldots,Y_k\}$ are produced conditionally independently given the database, and each mechanism $P_{Y_j|X^n}$ satisfies $\epsilon_j$-MI-DP individually, then as a collection $P_{Y^k|X^n}$ satisfies $\left( \sum_j \epsilon_j \right)$-MI-DP.
\end{cor}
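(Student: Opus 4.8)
The plan is to reduce the claim to the single-query MI-DP bounds via the chain rule for mutual information. Fix an index $i$ and an arbitrary database distribution $P_{X^n}$; it suffices to show $I(X_i;Y^k\mid\xmi)\le\sum_{j}\epsilon_j$ and then take the supremum over $i$ and $P_{X^n}$. First I would apply Property~\ref{property:mi chain rule} repeatedly, with every term conditioned on $\xmi$, to get
\begin{equation}
    I\!\left(X_i;Y^k\,\middle|\,\xmi\right) \;=\; \sum_{j=1}^{k} I\!\left(X_i;Y_j\,\middle|\,\xmi,\,Y^{j-1}\right),
\end{equation}
so it is enough to bound each summand by the corresponding $\epsilon_j$.

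For a fixed $j$, the key fact is that conditional independence of the query responses given the database means $Y_j$ and $Y^{j-1}$ are conditionally independent given $X^n=(X_i,\xmi)$; equivalently, $Y_j - X_i - Y^{j-1}$ is a Markov chain once we condition throughout on $\xmi$. Applying Property~\ref{property:mi order markov} in this conditioned setting (with $X_i$ playing the role of the middle variable) gives
\begin{equation}
    I\!\left(X_i;Y_j\,\middle|\,\xmi,\,Y^{j-1}\right) \;\le\; I\!\left(X_i;Y_j\,\middle|\,\xmi\right),
\end{equation}
and the right-hand side is at most $\epsilon_j$ because $P_{Y_j\mid X^n}$ satisfies $\epsilon_j$-MI-DP and that bound applies to our particular $i$ and $P_{X^n}$ (in fact to all of them). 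Summing over $j$ yields $I(X_i;Y^k\mid\xmi)\le\sum_j\epsilon_j$, and taking the supremum over $i$ and $P_{X^n}$ shows that $P_{Y^k\mid X^n}$ satisfies $\left(\sum_j\epsilon_j\right)$-MI-DP.

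The derivation is short once the three basic mutual-information identities are available, so I do not expect a genuine obstacle; the one point to get right is the direction of the Markov chain in the second step. It is conditional independence of the outputs \emph{given the entire database} that makes $X_i$ the center of the chain $Y_j - X_i - Y^{j-1}$ after conditioning on $\xmi$, which is exactly what keeps the chain-rule terms from accumulating slack — without it the clean additive bound would not follow. As for tightness, letting each query act on a separate, independent coordinate of $X_i$ through an erasure channel tuned as in Example~\ref{eg:delta} makes $I(X_i;Y^k\mid\xmi)$ equal to $\sum_j\epsilon_j$, so the constant cannot be improved.
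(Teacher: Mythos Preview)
Your proof is correct and follows essentially the same route as the paper: apply the chain rule (Property~\ref{property:mi chain rule}) to expand $I(X_i;Y^k\mid\xmi)$, use the conditional Markov structure via Property~\ref{property:mi order markov} to drop the extra conditioning on $Y^{j-1}$, and then invoke each $\epsilon_j$-MI-DP bound. Your explicit identification of the Markov chain $Y_j - X_i - Y^{j-1}$ given $\xmi$ and your tightness remark go slightly beyond what the paper spells out, but the argument is the same.
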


\begin{proof}
    For any $i$ and $P_{X^n}$, the chain rule of mutual information (Property~\ref{property:mi chain rule}) gives (a), and Property~\ref{property:mi order markov} gives (b):
    \begin{align}
        I(X_i;Y^k|X^{-1}) &\stackrel{(a)}{=}  \sum_{j=1}^k I(X_i;Y_j|X^{-i},Y^{j-1}) \nonumber \\
        &\stackrel{(b)}{\leq} \sum_{j=1}^k I(X_i;Y_j|X^{-i}) \label{eq:mi composition line to skip} \nonumber \\
        &\leq \sum_{j=1}^k \epsilon_j \text{ nats}.
    \end{align}
\end{proof}

Corollary~\ref{cor:conditionally independent} states that the effect of releasing multiple conditionally independent query responses has no more than an additive effect on the parameter of privacy.  It is worth noting two important points.  First, query responses that are not conditionally independent (i.e. the noise from one query response is somehow reused in the next) have no such guarantee, as the following example illustrates.

\begin{example}[Correlated query responses]
\label{example:correlated responses}
    Consider a database where each entry has a finite alphabet $|{\cal X}_i| \leq \infty$.  Consider two outputs of a query mechanism, $Y_1 = X_1 \oplus U$ and $Y_2 = U$, where $U$ is a uniformly distributed random variable on the set $\{1,\ldots,|{\cal X}_1|\}$, independent of the database instance, and $\oplus$ is addition modulo $|{\cal X}_1|$.  In other words, the first output $Y_1$ is $X_1$ encrypted by a one-time pad, and the second output $Y_2$ is the key to the one-time pad.  Clearly, the combination of $Y_1$ and $Y_2$ reveals $X_1$ and violates differential privacy.
\end{example}

On the other hand, Example~\ref{example:correlated responses} does not imply that correlated query responses should not be considered.  Quite to the contrary, query responses that are carefully constructed to be correlated with each other have the potential to achieve significantly better privacy after multiple queries, as demonstrated in \cite{hardt-rothblum10} and \cite{blum-avrim-ligett13}.

In general, the same composition claim of Corollary~\ref{cor:conditionally independent} holds even if the query responses are correlated as long as each response in sequence is specifically designed to satisfy differential privacy even with respect to the previous responses.  The following corollary states this claim, and the proof follows directly from the proof of Corollary~\ref{cor:conditionally independent} simply by skipping \eqref{eq:mi composition line to skip}.

\begin{cor}[Sequential queries]
    If several query responses $\{Y_1,\ldots,Y_k\}$ are produced in sequence, and each mechanism $P_{Y_j|X^n,Y^{j-1}}$ satisfies $\epsilon_j$-MI-DP individually, then as a collection $P_{Y^k|X^n}$ satisfies $\left( \sum_j \epsilon_j \right)$-MI-DP.
\end{cor}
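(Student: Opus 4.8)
The plan is to mimic the proof of Corollary~\ref{cor:conditionally independent} almost verbatim, but to omit the single step in that argument that invoked conditional independence of the query responses. Fix an index $i$ and an arbitrary database distribution $P_{X^n}$; together with the sequential mechanisms $P_{Y_j\mid X^n,Y^{j-1}}$ this induces a joint law on $(X^n,Y^k)$. Applying the chain rule of mutual information (Property~\ref{property:mi chain rule}), with everything conditioned on $X^{-i}$, gives
\begin{equation}
    I(X_i;Y^k\mid X^{-i}) = \sum_{j=1}^k I\!\left(X_i;Y_j\mid X^{-i},Y^{j-1}\right).
\end{equation}
In the conditionally independent case one then bounded each summand by $I(X_i;Y_j\mid X^{-i})$ using Property~\ref{property:mi order markov}; here we drop that intermediate inequality and bound each summand directly.

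The second step is to show that each term is at most $\epsilon_j$ nats using only the hypothesis that the $j$-th mechanism $P_{Y_j\mid X^n,Y^{j-1}}$ satisfies $\epsilon_j$-MI-DP individually. This is where I expect the only genuine subtlety: that MI-DP guarantee is a statement quantified over \emph{all} database distributions, whereas the term above uses the specific conditional law $P_{X^n\mid Y^{j-1}=y^{j-1}}$ that the earlier responses have induced, together with an outer expectation over $Y^{j-1}$. The resolution is to condition on $Y^{j-1}=y^{j-1}$: for each fixed $y^{j-1}$, the map $x^n\mapsto P_{Y_j\mid X^n=x^n,Y^{j-1}=y^{j-1}}$ is itself a randomized mechanism on databases, so its $\epsilon_j$-MI-DP guarantee, applied with the database distribution $P_{X^n\mid Y^{j-1}=y^{j-1}}$, yields $I(X_i;Y_j\mid X^{-i},Y^{j-1}=y^{j-1})\le\epsilon_j$. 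Averaging over $Y^{j-1}$ preserves the bound, so $I(X_i;Y_j\mid X^{-i},Y^{j-1})\le\epsilon_j$.

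Combining the two steps gives $I(X_i;Y^k\mid X^{-i})\le\sum_{j=1}^k\epsilon_j$ for the $i$ and $P_{X^n}$ fixed at the outset; taking the supremum over $i$ and $P_{X^n}$ then shows that $P_{Y^k\mid X^n}$ satisfies $\left(\sum_j\epsilon_j\right)$-MI-DP, which is the claim. I anticipate no obstacle beyond the bookkeeping in the middle step; the rest is a direct reuse of the Corollary~\ref{cor:conditionally independent} argument with the Markov-chain inequality removed, which is exactly why the statement was asserted to follow immediately.
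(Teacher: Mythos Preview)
Your proposal is correct and is exactly the paper's approach: the paper states that the proof follows from that of Corollary~\ref{cor:conditionally independent} by simply skipping the intermediate line~\eqref{eq:mi composition line to skip}, which is precisely what you do. Your explicit conditioning-on-$Y^{j-1}=y^{j-1}$ argument for the bound $I(X_i;Y_j\mid X^{-i},Y^{j-1})\le\epsilon_j$ just unpacks what the paper leaves implicit.
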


The next claim is about query responses that each depend on different subsets of the database.

\begin{cor}[Partial queries]
    If several query responses $\{Y_1,\ldots,Y_k\}$ are produced conditionally independently of each other from disjoint subsets of the database entries, denoted as $X_{{\cal I}_1},\ldots,X_{{\cal I}_k}$, with each mechanism $P_{Y_j|X_{{\cal I}_j}}$ satisfying $\epsilon$-MI-DP individually, then as a collection $P_{Y^k|X^n}$ also satisfies $\epsilon$-MI-DP.
\end{cor}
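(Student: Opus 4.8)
The plan is to fix an arbitrary entry index $i \in [n]$ and an arbitrary database distribution $P_{X^n}$, and to bound $I(X_i; Y^k \mid X^{-i})$ by $\epsilon$; since $i$ and $P_{X^n}$ are arbitrary, this is exactly $\epsilon$-MI-DP for the collection $P_{Y^k \mid X^n}$. By disjointness of $\mathcal{I}_1,\dots,\mathcal{I}_k$, the index $i$ belongs to at most one of them. If $i$ belongs to none, then conditioned on $X^{-i}$ every block $X_{\mathcal{I}_j}$ is a fixed quantity, so $Y^k$ is generated purely from the internal randomness of the $k$ mechanisms and is therefore independent of $X_i$ given $X^{-i}$; hence $I(X_i; Y^k \mid X^{-i}) = 0$ and there is nothing to prove. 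So assume $i \in \mathcal{I}_m$ for the unique such $m$ and write $Y^{-m} = (Y_j)_{j \neq m}$.

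First I would peel off the irrelevant responses. For every $j \neq m$ we have $\mathcal{I}_j \subseteq [n]\setminus\{i\}$, so $X_{\mathcal{I}_j}$ is a deterministic function of $X^{-i}$; together with the conditional independence of $Y_1,\dots,Y_k$ given $X^n$ this shows that, conditioned on $X^{-i}$, the joint law of $(X_i, Y_1,\dots,Y_k)$ factors into a part depending only on $(X_i, Y_m)$ and a part depending only on $Y^{-m}$. Hence $(X_i, Y_m) \indep Y^{-m} \mid X^{-i}$, which in particular gives $I(X_i; Y^{-m} \mid X^{-i}, Y_m) = 0$. The chain rule (Property~\ref{property:mi chain rule}, applied with the conditioning on $X^{-i}$) then reduces the problem to a single response: $I(X_i; Y^k \mid X^{-i}) = I(X_i; Y_m \mid X^{-i}) + I(X_i; Y^{-m} \mid X^{-i}, Y_m) = I(X_i; Y_m \mid X^{-i})$.

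Next I would split the conditioning $X^{-i} = (X_{\mathcal{I}_m\setminus\{i\}}, X_{[n]\setminus\mathcal{I}_m})$ and use a Markov argument. Since $Y_m$ is produced by the mechanism $P_{Y_m \mid X_{\mathcal{I}_m}}$, we have $Y_m \indep X^n \mid X_{\mathcal{I}_m}$, and in particular $Y_m \indep X_{[n]\setminus\mathcal{I}_m} \mid X_{\mathcal{I}_m}$; equivalently, after conditioning on $X_{\mathcal{I}_m\setminus\{i\}}$, the variables form the Markov chain $Y_m - X_i - X_{[n]\setminus\mathcal{I}_m}$. Property~\ref{property:mi order markov}, applied with that conditioning, then gives $I(X_i; Y_m \mid X_{\mathcal{I}_m\setminus\{i\}}, X_{[n]\setminus\mathcal{I}_m}) \le I(X_i; Y_m \mid X_{\mathcal{I}_m\setminus\{i\}})$. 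The left-hand side equals $I(X_i; Y_m \mid X^{-i})$, and the right-hand side is at most $\epsilon$ by the $\epsilon$-MI-DP guarantee of $P_{Y_m \mid X_{\mathcal{I}_m}}$, instantiated at the entry $i \in \mathcal{I}_m$ with the database distribution $P_{X_{\mathcal{I}_m}}$ obtained by marginalizing $P_{X^n}$. Chaining the displays yields $I(X_i; Y^k \mid X^{-i}) \le \epsilon$, as desired.

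I expect the only real obstacle to be keeping the conditioning straight: one must correctly separate $X^{-i}$ into the sub-block lying inside $\mathcal{I}_m$ and the sub-block lying outside, recognize that the mechanism depending only on $X_{\mathcal{I}_m}$ is exactly what supplies the Markov chain $Y_m - X_i - X_{[n]\setminus\mathcal{I}_m}$ needed to invoke Property~\ref{property:mi order markov}, and confirm that the various conditional-independence statements survive the extra conditioning. Beyond that, the argument is just the chain rule plus the monotonicity already recorded in Property~\ref{property:mi order markov}, so no new machinery is needed.
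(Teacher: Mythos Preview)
Your proof is correct and follows essentially the same approach as the paper: apply the chain rule to split off $Y^{-m}$, observe that term vanishes by conditional independence, and then bound $I(X_i;Y_m\mid X^{-i})$ using the $\epsilon$-MI-DP guarantee of the $m$th sub-mechanism. Your version is in fact more careful than the paper's three-line proof: you explicitly treat the case where $i$ lies in none of the $\mathcal{I}_j$, and you supply the Markov-chain argument (via Property~\ref{property:mi order markov}) needed to pass from conditioning on all of $X^{-i}$ down to conditioning only on $X_{\mathcal{I}_m\setminus\{i\}}$, a step the paper's final ``$\leq \epsilon$'' leaves implicit.
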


\begin{proof}
    Let $f(i)$ be the index $j$ such that $i \in {\cal I}_j$.  For any $i$ and $P_{X^n}$, the chain rule of mutual information (Property~\ref{property:mi chain rule}) gives:
    \begin{align}
        I(X_i;Y^k|X^{-1}) &=  I(X_i;Y_{f(i)}|X^{-i}) + I(X_i;Y^{-f(i)}|X^{-i},Y_{f(i)}) \nonumber \\
        &= I(X_i;Y_{f(i)}|X^{-i}) \nonumber \\
        &\leq \epsilon \text{ nats}.
    \end{align}
\end{proof}

\section{A Discrepancy}

While most properties of $\epsilon$-DP or $(\epsilon, \delta)$-DP are also properties of MI-DP, it turns out that one basic property does not carry over.

Differential privacy is defined with respect to neighboring database instances.  What privacy can be guaranteed if some bounded number of entries are changed in the database?  Similar to the composition properties, the closeness of the output distribution scales proportionally with the number of database changes.  The following properties are obtained by repeated application \eqref{eq:ed p less q} and \eqref{eq:ed q less p} from Definition~\ref{def:ed close}.

\begin{property}[Epsilon]
\label{property:group e}
    Suppose $x^n$ and $\tilde{x}^n$ are instances of the database that differ in at most $k$ entries, and that the randomized mechanism $P_{Y|X^n}$ is $\epsilon$-DP.  Then
    \begin{equation}
        P_{Y|X^n=x^n} \stackrel{(k \epsilon, 0)}{\approx} P_{Y|X^n=\tilde{x}^n}.
    \end{equation}
\end{property}

\begin{property}[Delta]
\label{property:group d}
    Suppose $x^n$ and $\tilde{x}^n$ are instances of the database that differ in at most $k$ entries, and that the randomized mechanism $P_{Y|X^n}$ is $(\delta)$-DP.  Then
    \begin{equation}
        P_{Y|X^n=x^n} \stackrel{(0, k \delta)}{\approx} P_{Y|X^n=\tilde{x}^n}.
    \end{equation}
\end{property}

\begin{property}[General]
\label{property:group both}
    Suppose $x^n$ and $\tilde{x}^n$ are instances of the database that differ in at most $k$ entries, and that the randomized mechanism $P_{Y|X^n}$ is $(\epsilon, \delta)$-DP.  Then
    \begin{equation}
        P_{Y|X^n=x^n} \stackrel{\left(k \epsilon, \frac{e^{k\epsilon} - 1}{e^{\epsilon} - 1} \delta \right)}{\approx} P_{Y|X^n=\tilde{x}^n}.
    \end{equation}
\end{property}

On the other hand, MI-DP does not have an analogous property.  Even if a mechanism satisfies $\epsilon$-MI-DP, there may not be a bound on $I(X_{\cal I};Y|X_{{\cal I}^c})$, where ${\cal I}$ represents a subset of $|{\cal I}|=k$ indices.  Consider the following example.

\begin{example}
\label{eg:group privacy}
    Consider a database with two entries, $X_1$ and $X_2$, which are real valued.  The randomized mechanism $P_{Y|X_1,X_2}$ produces an output $Y$ which can be a real number or one of two special values $e_1$ or $e_2$.  The behavior of the mechanism is best described in two cases:
    
    If $X_1 = X_2$:
    \begin{equation}
        Y = 
        \begin{cases}
        X_1, &\text{with probability } \epsilon, \\
        e_1, &\text{with probability } 1 - \epsilon.
        \end{cases}
    \end{equation}
    If $X_1 \neq X_2$:
    \begin{equation}
        Y = 
        \begin{cases}
        e_2, &\text{with probability } \epsilon, \\
        e_1, &\text{with probability } 1 - \epsilon.
        \end{cases}
    \end{equation}
    
    This mechanism satisfies $(\epsilon \ln 2)$-MI-DP.  Notice that for any value of $X_2=x_2$, we have a binary erasure channel from $X_1$ to $Y$, with binary input determined by whether $X_1 = x_2$ or not.  The symbol $e_1$ serves as the erasure.  The symbol $e_2$ represents the unerased indicator that $X_1 \neq x_2$.  This binary erasure channel with erasure probability $1-\epsilon$ has mutual information bounded above by $\epsilon \ln 2$ nats (the capacity of the erasure channel).
    
    On the other hand, the mutual information $I(X_1,X_2;Y)$ is unbounded if there are no constraints on ${\cal X}_1$ and ${\cal X}_2$.  Indeed, if we let $X_1$ be a continuous random variable, and we set $X_2 = X_1$, then
    \begin{equation}
        I(X_1,X_2;Y) = \infty.
    \end{equation}
    
    More generally, if the domains ${\cal X}_1$ and ${\cal X}_2$ are equal, then the capacity of the erasure channel gives the achievable mutual information (where $e_2$ represents an additional input symbol selected by any choice of $X_1 \neq X_2$):
    \begin{align}
        \max_{P_{X_1,X_2}} I(X_1,X_2;Y) &= \epsilon \log (|{\cal X}_1| + 1) \nonumber \\
        &= \epsilon \log (|{\cal Y}| - 1).
    \end{align}
\end{example}

In fact, Example~\ref{eg:group privacy} might be best interpreted as a fortunate advantage of MI-DP.  With any query mechanism, there is a trade-off between privacy and the informational utility to be gained from the output.  If we apply Property~\ref{property:group e} with $k=n$, the conclusion is that
\begin{equation}
    P_{Y|X^n=x^n} \stackrel{(n \epsilon, 0)}{\approx} P_{Y|X^n=\tilde{x}^n}
\end{equation}
for any two databases $x^n$ and $\tilde{x}^n$.  By revisiting the proof of Lemma~\ref{lem:e implies mi}, we obtain
\begin{equation}
    I(X^n;Y) \leq \min \left\{ n \epsilon, (n \epsilon)^2 \right\} \text{ nats}.
\end{equation}
One way to view this is as a crude bound on the utility of the query output.  The bound is detrimental if $n$ is not large.  On the other hand, MI-DP does not imply such a constraint.

If, however, we take into account a cardinality bound on the database entries or the query output, then there is indeed an upper bound on the information leaked from a group of database entries.  This is obtained by using Property~\ref{property:group d} in combination with Lemma~\ref{LEM:MI IMPLIES D}, followed by repeating the proof of Lemma~\ref{LEM:D IMPLIES MI} for a group rather than an individual entry.

\begin{cor}
    Suppose the randomized mechanism $P_{Y|X^n}$ satisfies $\epsilon$-MI-DP.  Then for any subset of indices ${\cal I}$, with $|{\cal I}| = k$, and with ${\cal I}^c = [n] \setminus {\cal I}$,
    \begin{equation}
        \sup_{P_{X^n}} I(X_{\cal I};Y|X_{{\cal I}^c}) \leq 2 h \left( k \sqrt{2 \epsilon} \right) + 2 k \sqrt{2 \epsilon} \log M,
    \end{equation}
    where $M = \min \left\{ |{\cal Y}|, \left( \max_i |{\cal X}_i| \right)^k + 1 \right\}$.
\end{cor}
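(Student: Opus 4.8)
The plan is to chain three facts already in hand: Lemma~\ref{LEM:MI IMPLIES D} to pass from $\epsilon$-MI-DP to a total-variation guarantee, Property~\ref{property:group d} to amplify that guarantee from single entries to a group of $k$ entries, and then the argument behind Lemma~\ref{LEM:D IMPLIES MI}, applied to the block $X_{\cal I}$ treated as a single ``super-entry,'' to convert the group guarantee back into a bound on conditional mutual information.

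Concretely, suppose $P_{Y|X^n}$ satisfies $\epsilon$-MI-DP. By Lemma~\ref{LEM:MI IMPLIES D} the mechanism satisfies $(0,\sqrt{2\epsilon})$-DP, i.e.\ $(\delta)$-DP with $\delta=\sqrt{2\epsilon}$. Now fix ${\cal I}$ with $|{\cal I}|=k$ and ${\cal I}^c=[n]\setminus{\cal I}$. Two instances $x^n$ and $\tilde{x}^n$ that agree on the coordinates in ${\cal I}^c$ differ in at most $k$ entries, so Property~\ref{property:group d} gives
\begin{equation}
    P_{Y|X^n=x^n} \stackrel{(0,\,k\sqrt{2\epsilon})}{\approx} P_{Y|X^n=\tilde{x}^n}
\end{equation}
for all such pairs. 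In other words, if we regard $X_{\cal I}$ as a single database entry whose alphabet is the product set $\prod_{i\in{\cal I}}{\cal X}_i$, of cardinality at most $\left(\max_i|{\cal X}_i|\right)^k$, and regard $X_{{\cal I}^c}$ as ``the rest of the database,'' then the mechanism $P_{Y|X_{\cal I},X_{{\cal I}^c}}$ satisfies $\left(k\sqrt{2\epsilon}\right)$-DP in the ordinary one-entry sense.

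The final step is to rerun the proof of Lemma~\ref{LEM:D IMPLIES MI} verbatim under this relabeling. Its finiteness hypothesis is met whenever $|{\cal Y}|$ is finite or every ${\cal X}_i$ with $i\in{\cal I}$ is finite (so the product alphabet is finite), and the bound that lemma produces for one entry with an alphabet of size $\ell$, namely $2h(\delta')+2\delta'\log(\min\{|{\cal Y}|,\ell+1\})$, becomes for our super-entry exactly the stated bound with $\delta'=k\sqrt{2\epsilon}$ and $\ell=\left(\max_i|{\cal X}_i|\right)^k$. Because the argument in Lemma~\ref{LEM:D IMPLIES MI} bounds the conditional mutual information uniformly over the choice of database distribution, taking $\sup_{P_{X^n}}$ costs nothing, and we obtain $\sup_{P_{X^n}} I(X_{\cal I};Y|X_{{\cal I}^c}) \le 2h(k\sqrt{2\epsilon}) + 2k\sqrt{2\epsilon}\log M$ with $M=\min\{|{\cal Y}|,(\max_i|{\cal X}_i|)^k+1\}$. (As usual this is informative only when $k\sqrt{2\epsilon}<1$; otherwise the right-hand side exceeds $\log M$ and the statement, though true, is vacuous.)

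The only genuine work is the bookkeeping in this last step: one must check that the proof of Lemma~\ref{LEM:D IMPLIES MI} nowhere uses that a ``neighbor'' perturbs a single \emph{original} coordinate---only that the two input laws it compares are $(0,\delta')$-close, which is precisely what Property~\ref{property:group d} supplies---and that the cardinality feeding its bound is the size of the super-entry's alphabet, hence the $k$-fold product $\left(\max_i|{\cal X}_i|\right)^k$ rather than $\max_i|{\cal X}_i|$. Everything else is a direct invocation of Lemma~\ref{LEM:MI IMPLIES D} and Property~\ref{property:group d}, so no new estimates are needed.
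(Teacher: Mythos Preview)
Your proposal is correct and follows exactly the route the paper indicates: apply Lemma~\ref{LEM:MI IMPLIES D} to get $(\sqrt{2\epsilon})$-DP, use Property~\ref{property:group d} to obtain the $(0,k\sqrt{2\epsilon})$-closeness for instances differing only on ${\cal I}$, and then rerun the proof of Lemma~\ref{LEM:D IMPLIES MI} with $X_{\cal I}$ treated as a single super-entry of alphabet size at most $(\max_i|{\cal X}_i|)^k$. Your observation that the proof of Lemma~\ref{LEM:D IMPLIES MI} uses only the $(0,\delta')$-closeness of the two conditional output laws---not the single-coordinate structure of neighbors---is precisely the point that makes the transplant go through.
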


\section{Variations of Diff. Privacy}
\label{sec:viewpoint}

Many variations of differential privacy have been proposed in the literature to provide different assurances.  Here we demonstrate how mutual-information differential-privacy can be adapted to correspond to these various definitions.

\subsection{Personalized Differential Privacy}
Personalized differential privacy \cite{jorgensen:personalizeddp} addresses the situation where participants of the database may have different concerns about the level of privacy.  This is handled by assigning a different $\epsilon_i$ for each database entry $X_i$.  That is, for any database instances $x^n$ and $\tilde{x}^n$ which differ in only the $i$th place,
\begin{equation}
    P_{Y|X^n=x^n} \stackrel{(\epsilon_i,0)}{\approx} P_{Y|X^n=\tilde{x}^n}.
\end{equation}

The modification to MI-DP would be to require that for each $i$,
\begin{equation}
   \sup_{P_{X^n}} I(X_i;Y|X^{-i}) \leq \epsilon_i \text{ nats}.
\end{equation}

\subsection{Free-Lunch Privacy}
Free-lunch privacy was both defined and refuted in \cite{kifer:nfl} as a stronger privacy definition which puts no restriction on which database instances must be indistinguishable.  A mechanism $P_{Y|X^n}$ is $\epsilon$-free-lunch private if every pair of database instances $x^n$ and $\tilde{x}^n$ satisfies
\begin{equation}
    P_{Y|X^n=x^n} \stackrel{(\epsilon,0)}{\approx} P_{Y|X^n=\tilde{x}^n}.
\end{equation}

The MI-DP equivalent of this would be
\begin{equation}
\label{eq:flp}
   \sup_{P_{X^n}} I(X^n;Y) \leq \epsilon \text{ nats}.
\end{equation}
We can easily see the strength of this definition by applying the chain rule of mutual information (Property~\ref{property:mi chain rule}) to \eqref{eq:flp}.  The result is that for any pair of disjoint index sets ${\cal I}$ and ${\cal J}$,
\begin{equation}
\label{eq:flp conditional}
    \sup_{P_{X^n}} I(X_{\cal I};Y|X_{\cal J}) \leq \epsilon \text{ nats}.
\end{equation}
On the other hand, \eqref{eq:flp} and \eqref{eq:flp conditional} illustrate the poor utility provided by the $\epsilon$-free-lunch privacy mechanism, as the information contained in the output is always upper bounded by $\epsilon$ regardless of distribution and prior knowledge.

\subsection{Bayesian Differential Privacy}
Bayesian differential privacy \cite{yang:bayesian} deals with the possible privacy degradation of differential privacy if the entries in the database are correlated.  As was discussed in Section~\ref{section:strong adversary}, a weak adversary who has less background knowledge of the database may stand to gain much more information than the adversary who knows all but one entry.  Bayesian differential privacy is meant to protect simultaneously against all adversaries, but in order to do so it assumes a prior distribution on the database.

Given a prior distribution $P_{X^n}$, a mechanism $P_{Y|X^n}$ is $\epsilon$-Bayesian differentially private if, for any index $i$ and subset of indices ${\cal I}$,
\begin{equation}
\label{eq:bayesian privacy}
    P_{Y|X_i=x_i,X_{\cal I} = x_{\cal I}} \stackrel{(\epsilon,0)}{\approx} P_{Y|X_i=\tilde{x}_i,X_{\cal I} = x_{\cal I}}.
\end{equation}
Notice that the conditional distributions are not necessarily conditioned on the entire database.

The MI-DP equivalent is, for any index $i$ and subset of indices ${\cal I}$,
\begin{equation}
    I(X_i;Y|X_{\cal I}) \leq \epsilon \text{ nats},
\end{equation}
which is in fact implied by \eqref{eq:bayesian privacy}.  Furthermore, this notion of privacy can be strengthened by maximizing over database distributions, making it a stronger notion of privacy than differential privacy.

In spite of the additional strength of this privacy metric (especially when removing the Bayesian prior assumption by maximizing over the database distribution), this is not nearly as pessimistic as free-lunch privacy.  As a comparison, the chain rule of mutual information (Property~\ref{property:mi chain rule}) in this case implies that for any two disjoint index sets ${\cal I}$ and ${\cal J}$,
\begin{equation}
    I(X_{\cal I};Y|X_{\cal J}) \leq |{\cal I}| \epsilon \text{ nats}.
\end{equation}
Consequently,
\begin{equation}
    I(X^n;Y) \leq n \epsilon \text{ nats}.
\end{equation}

\subsection{Adversarial Privacy}
Adversarial privacy \cite{rastogi:adversarial} does three things differently from differential privacy.  First, it assumes a prior distribution $P_{X^n}$ on the database (like Bayesian differential privacy).  Second, it does not restrict attention to neighboring database instances (like free-lunch privacy).  Third, it asymmetrically requires
\begin{equation}
\label{eq:adversarial privacy}
    \ln \frac{d P_{X^n|Y=y}}{d P_{X^n}} \leq \epsilon \quad \forall y \in {\cal Y}.
\end{equation}
The idea is that the adversary can not increase certainty about a particular database value by much, even while other database values may be eliminated.

Since mutual information is the expected value of the quantity on the left of \eqref{eq:adversarial privacy}, it is clear that adversarial privacy implies
\begin{equation}
    I(X^n;Y) \leq \epsilon \text{ nats}.
\end{equation}
Thus, adversarial privacy has similarity to free-lunch privacy, though in the Bayesian setting.  The subtleties of the asymmetric constraint are not captured in this MI-DP variant.

\section{R\'{e}nyi Entropy Generalization}
The notion of $\alpha$-mutual-information is the generalization of mutual information using R\'{e}nyi information measures. There are many proposed ways to accomplish such a generalization. Here we adopt Sibson's proposal (see \cite{verdu:alpha}):
\begin{equation}
    I_\alpha^\mathsf{s}(X;Y) = \min_{Q_Y} D_\alpha(P_{Y|X} \| Q_Y | P_X),
\end{equation}
where $D_\alpha$ is the conditional R\'{e}nyi divergence of order $\alpha$ and the minimization is over all distributions $Q_Y$ on ${\cal Y}$.  Shannon's mutual information corresponds to $\alpha = 1$.

A simple upper bound holds for $\alpha$-mutual information for all $\alpha \geq 0$, given in the following lemma.

\begin{lem}[$\alpha$-mutual-information upper bound]
\label{lem:alpha mi}
If a randomized mechanism $P_{Y|X^n}$ satisfies $\epsilon$-DP, then for all $\alpha \geq 0$, $i$, and instances of the remainder of the database $x^{-i}$,
\begin{equation}
    \sup_{P_{X_i}} I_\alpha^\mathsf{s}(X_i;Y|\xmi = x^{-i}) \leq \epsilon \text{ nats}.
\end{equation}
\end{lem}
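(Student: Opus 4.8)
The plan is to exploit the minimization over output distributions built into Sibson's definition: for \emph{any} distribution $Q_Y$ on $\mathcal{Y}$ we have $I_\alpha^{\mathsf{s}}(X_i;Y\mid\xmi=x^{-i}) \le D_\alpha\big(P_{Y\mid X_i,\,\xmi=x^{-i}} \,\big\|\, Q_Y \,\big|\, P_{X_i}\big)$, so it suffices to exhibit one convenient reference $Q_Y$ and show that the resulting conditional R\'enyi divergence is at most $\epsilon$ for every prior $P_{X_i}$. I would fix an arbitrary value $a_0\in\mathcal{X}_i$ and take $Q_Y = P_{Y\mid X_i=a_0,\,\xmi=x^{-i}}$.

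First I would record the pointwise likelihood-ratio fact. Having conditioned on $\xmi=x^{-i}$, the only databases in play are $(a,x^{-i})$ for $a\in\mathcal{X}_i$, and each such database is a neighbor of $(a_0,x^{-i})$. Hence $\epsilon$-DP together with the restatement \eqref{eq:e0 restatement} gives, for every $a$, $\big|\ln\frac{dP_a}{dQ_Y}\big|\le\epsilon$ almost everywhere, where $P_a := P_{Y\mid X_i=a,\,\xmi=x^{-i}}$; in particular $P_a \ll \gg Q_Y$. Writing $r_a := \frac{dP_a}{dQ_Y}\in[e^{-\epsilon},e^{\epsilon}]$ and using the identity $e^{(\alpha-1)D_\alpha(P_a\|Q_Y)} = \int r_a^{\alpha-1}\,dP_a$, the integrand is bounded by $e^{(\alpha-1)\epsilon}$ from above when $\alpha>1$ and from below when $0\le\alpha<1$ (the exponent $\alpha-1$ controls both the sign and the monotonicity of $r\mapsto r^{\alpha-1}$); dividing the logarithm by $\alpha-1$ --- which reverses the inequality precisely in the $\alpha<1$ case --- yields $D_\alpha(P_a\|Q_Y)\le\epsilon$ for all $a$ and all $\alpha\ge0$, $\alpha\neq1$. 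The case $\alpha=1$ is the ordinary KL bound $D(P_a\|Q_Y)\le\epsilon$, already contained in Lemma~\ref{lem:e implies mi}.

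Then I would pass from an individual $a$ to the $P_{X_i}$-average. With $a$ a dummy for the value of $X_i$, the conditional R\'enyi divergence expands as $D_\alpha(P_{Y\mid X_i,\,\xmi=x^{-i}}\,\|\,Q_Y\mid P_{X_i}) = \frac{1}{\alpha-1}\ln\E_{a\sim P_{X_i}}\!\big[e^{(\alpha-1)D_\alpha(P_a\|Q_Y)}\big]$, which has exactly the same form as before: the per-symbol bound $D_\alpha(P_a\|Q_Y)\le\epsilon$ makes the inner exponential $\le e^{(\alpha-1)\epsilon}$ when $\alpha>1$ and $\ge e^{(\alpha-1)\epsilon}$ when $\alpha<1$, hence so is the expectation, hence --- after taking the logarithm and dividing by $\alpha-1$ --- the whole conditional divergence is $\le\epsilon$; for $\alpha=1$ it is simply the average $\E[D(P_a\|Q_Y)]\le\epsilon$. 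Since $a_0$, and therefore $Q_Y$, was chosen independently of $P_{X_i}$, taking the supremum over $P_{X_i}$ completes the argument.

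I do not anticipate a real obstacle here. The only care needed is bookkeeping: the sign of $\alpha-1$ reverses the direction of several inequalities, so the cases $\alpha>1$, $\alpha<1$, and $\alpha=1$ should be stated separately; and one must check that the chosen reference output distribution really is a neighbor of \emph{every} conditional $P_a$ --- which is exactly why the statement conditions on the remainder of the database, and why the same argument would not give a bound on the unconditional $I_\alpha^{\mathsf{s}}(X_i;Y)$ when the database entries are correlated.
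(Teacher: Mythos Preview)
Your proposal is correct and follows essentially the same route as the paper's proof: both exploit the minimization in Sibson's definition by choosing $Q_Y = P_{Y\mid X_i=a_0,\,X^{-i}=x^{-i}}$ for a fixed $a_0$, then use the $\epsilon$-DP bound on the Radon--Nikodym derivative to control the resulting R\'enyi divergence, handling $\alpha=1$ separately via Lemma~\ref{lem:e implies mi}. The only cosmetic difference is that you first bound each per-symbol $D_\alpha(P_a\|Q_Y)$ and then average, whereas the paper works directly with the joint expectation; your explicit treatment of the sign of $\alpha-1$ is actually cleaner than the paper's somewhat terse handling of that point.
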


\begin{proof}
Let $\alpha \geq 0$, $i$, and $P_{X^n}$ be arbitrary.  In order to abbreviate notation, denote the event $\{X^{-i} = x^{-i}\}$ as $U$. Pick an arbitrary $x_i \in {\cal X}_i$,
\begin{align}
    I_\alpha^\mathsf{s}(X_i;Y|U) &= \min_{Q_Y} D_\alpha(P_{Y|X_i, U} \| Q_Y | P_{X_i|U}) \nonumber \\
    &\leq   D_\alpha(P_{Y|X_i, U} \| P_{Y|X_i = x_i, U} | P_{X_i|U}) \nonumber \\
    &= D_\alpha(P_{Y|X_i, U} P_{X_i|U}\| P_{Y|X_i = x_i, U} P_{X_i|U}) \nonumber \\
    &= \frac{1}{\alpha - 1} \log \E\left[\frac{d P_{Y|X_i, U} P_{X_i|U}}{d P_{Y|X_i = x_i, U} P_{X_i|U}}(Y^*,X^*)\right]^{\alpha - 1} \nonumber \\
    &= \frac{1}{\alpha - 1} \log \E\left[\frac{d P_{Y|X_i, U}}{d P_{Y|X_i = x_i, U}}(Y^*,X^*)\right]^{\alpha - 1},
\end{align}
where $(Y^*, X^*) \sim P_{Y|X_i, U} P_{X_i|U}$.

For $\alpha \neq 1$, the $\epsilon$-DP constraint implies that $\frac{d P_{Y|X_i U}}{d P_{Y|X_i = x_i, U}} \leq e^{\epsilon}$ for all values of $X_i$, thus
\begin{align}
    I_\alpha^\mathsf{s}(X_i;Y|U) &\leq \frac{1}{\alpha - 1} \log \E[e^\epsilon]^{\alpha - 1} \nonumber \\
    &= \epsilon \text{ nats}.
\end{align}
For the case $\alpha = 1$, the $\alpha$-mutual-information reduces to Shannon's mutual information, and we have $I_\alpha^\mathsf{s}(X_i;Y|U) = I(X_i;Y|U) \leq \epsilon$ from Lemma~\ref{lem:e implies mi}.
\end{proof}

Combining Property~\ref{property:group e} with the proof of Lemma~\ref{lem:alpha mi} gives the following corollary:

\begin{cor}
If the mechanism $P_{Y|X^n}$ satisfies $\epsilon$-DP, then
\begin{equation}
    \sup_{P_{X^n}} I_\alpha^\mathsf{s}(X^n;Y) \leq n \epsilon \text{ nats}.
\end{equation}
\end{cor}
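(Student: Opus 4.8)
The plan is to reuse the argument from the proof of Lemma~\ref{lem:alpha mi} essentially verbatim, but applied to the whole database $X^n$ rather than to a single entry, with Property~\ref{property:group e} (taken with $k=n$) supplying the required likelihood bound. The conceptual point is that for $I_\alpha^\mathsf{s}(X^n;Y)$ there is no ``remainder of the database'' to condition on, but group privacy tells us that \emph{any} two database instances $x^n$ and $\hat{x}^n$ satisfy $P_{Y|X^n=x^n} \stackrel{(n\epsilon,0)}{\approx} P_{Y|X^n=\hat{x}^n}$; by \eqref{eq:e0 restatement} this means the two laws are mutually absolutely continuous and $\frac{\dd P_{Y|X^n=x^n}}{\dd P_{Y|X^n=\hat{x}^n}} \le e^{n\epsilon}$ pointwise.

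First I would fix an arbitrary database distribution $P_{X^n}$ and an arbitrary reference instance $\hat{x}^n$, and use Sibson's definition to write $I_\alpha^\mathsf{s}(X^n;Y) = \min_{Q_Y} D_\alpha(P_{Y|X^n}\|Q_Y\,|\,P_{X^n}) \le D_\alpha(P_{Y|X^n}\|P_{Y|X^n=\hat{x}^n}\,|\,P_{X^n})$, where the inequality only reflects the (suboptimal) choice $Q_Y = P_{Y|X^n=\hat{x}^n}$. For $\alpha\neq 1$ I would then unfold the conditional R\'enyi divergence through the same chain of equalities as in the proof of Lemma~\ref{lem:alpha mi}, arriving at $\frac{1}{\alpha-1}\log \E\left[\left(\frac{\dd P_{Y|X^n}}{\dd P_{Y|X^n=\hat{x}^n}}(Y^*,X^*)\right)^{\alpha-1}\right]$ with $(Y^*,X^*)\sim P_{Y|X^n}P_{X^n}$, and substitute the bound $\frac{\dd P_{Y|X^n}}{\dd P_{Y|X^n=\hat{x}^n}} \le e^{n\epsilon}$. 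Exactly as in Lemma~\ref{lem:alpha mi}, the sign of $\alpha-1$ reverses both the monotonicity of $t\mapsto t^{\alpha-1}$ and that of $t\mapsto\frac{1}{\alpha-1}\log t$, so in both regimes the right-hand side is at most $n\epsilon$ nats.

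For the case $\alpha=1$, Sibson's $\alpha$-mutual information reduces to Shannon's mutual information, and the bound $I(X^n;Y)\le n\epsilon$ nats follows from Property~\ref{property:group e} with $k=n$ together with the argument in the proof of Lemma~\ref{lem:e implies mi} (the crude utility bound already observed earlier in the paper). Since every bound obtained equals $n\epsilon$ regardless of the choice of $P_{X^n}$, taking the supremum over $P_{X^n}$ gives the stated conclusion. I do not anticipate a genuine obstacle here: the only points to verify are that $Q_Y = P_{Y|X^n=\hat{x}^n}$ is an admissible choice in the minimization (it is, being merely suboptimal) and that the Radon--Nikodym derivatives in play are well defined, both of which are guaranteed by $\epsilon$-DP via Property~\ref{property:group e}.
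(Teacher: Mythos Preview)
Your proposal is correct and follows essentially the same approach as the paper: the paper's proof is the one-line remark that the corollary follows by combining Property~\ref{property:group e} (group privacy with $k=n$) with the proof of Lemma~\ref{lem:alpha mi}, which is exactly what you spell out. Your treatment of the $\alpha<1$ versus $\alpha>1$ cases is in fact more carefully stated than the paper's own argument for Lemma~\ref{lem:alpha mi}.
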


Furthermore, for $\alpha>0$, when maximizing over database distributions $P_{X^n}$, all three notions of $\alpha$-mutual-information discussed in \cite{verdu:alpha} are equivalent.  Thus,
\begin{equation}
\label{eq:alpha equivalent}
    \sup_{P_{X^n}} I_\alpha^\mathsf{s}(X^n;Y)
    =  \sup_{P_{X^n}} I_\alpha^\mathsf{a}(X^n;Y)
    =  \sup_{P_{X^n}} I_\alpha^\mathsf{c}(X^n;Y)
    \leq n \epsilon \text{ nats}.
\end{equation}
In \cite{alvim:min} and \cite{barthe:min}, the information leakage is defined as
\begin{equation}
    I_{\infty}(X^n;Y) = H_\infty(X^n) - H_\infty(X^n|Y)
\end{equation}
where
\begin{equation}
H_\infty(X^n|Y) = -\log \mathbb{E} \left[ \max_{x^n} P_{X^n|Y}(x^n|Y) \right].
\end{equation}
This definition matches Arimoto's proposal $I_\infty^\mathsf{a}(X^n;Y)$, so it is a special case of \eqref{eq:alpha equivalent}.

\section{Acknowledgements}
This work was supported by the Air Force Office of Scientific Research (grant FA9550-15-1-0180) and the National Science Foundation (grant CCF-1350595).

\bibliographystyle{abbrv}
\bibliography{ref}

\begin{thebibliography}{10}

\bibitem{Adam:1989:SMS:76894.76895}
N.~R. Adam and J.~C. Worthmann.
\newblock Security-control methods for statistical databases: A comparative
  study.
\newblock {\em ACM Comput. Surv.}, 21(4):515--556, Dec. 1989.

\bibitem{alicki-fannes04}
R.~Alicki and M.~Fannes.
\newblock Continuity of quantum conditional information.
\newblock {\em Journal of Physics A: Mathematical and General}, 37(5):L55,
  2004.

\bibitem{alvim:min}
M.~S. Alvim, M.~E. Andr{\'e}s, K.~Chatzikokolakis, P.~Degano, and
  C.~Palamidessi.
\newblock Differential privacy: on the trade-off between utility and
  information leakage.
\newblock In {\em Formal Aspects of Security and Trust}, pages 39--54. Springer
  Berlin Heidelberg, 2012.

\bibitem{barthe:min}
G.~Barthe and B.~K{\"o}pf.
\newblock Information-theoretic bounds for differentially private mechanisms.
\newblock In {\em 24th Computer Security Foundations Symposium (CSF)}, pages
  191--204. IEEE, 2011.

\bibitem{bellare:semantic}
M.~Bellare, S.~Tessaro, and A.~Vardy.
\newblock Semantic security for the wiretap channel.
\newblock In {\em Advances in Cryptology--CRYPTO}, pages 294--311. Springer,
  2012.

\bibitem{blum-avrim-ligett13}
A.~Blum, K.~Ligett, and A.~Roth.
\newblock A learning theory approach to noninteractive database privacy.
\newblock {\em J. ACM}, 60(2):12:1--12:25, May 2013.

\bibitem{boche:entropy}
H.~Boche, R.~F. Schaefer, and H.~V. Poor.
\newblock On the continuity of the secrecy capacity of compound and arbitrarily
  varying wiretap channels.
\newblock {\em IEEE Transactions on Information Forensics and Security},
  10(12):2531--2546, Dec 2015.

\bibitem{cover:it}
T.~Cover and J.~A. Thomas.
\newblock {\em Elements of information theory}.
\newblock Hoboken, NJ: Wiley-Interscience, 2 edition, 2006.

\bibitem{de:lower}
A.~De.
\newblock Lower bounds in differential privacy.
\newblock In {\em Theory of Cryptography}, pages 321--338. Springer, 2012.

\bibitem{duchi:lp}
J.~C. Duchi, M.~I. Jordan, and M.~J. Wainwright.
\newblock Local privacy and statistical minimax rates.
\newblock In {\em 54th Annual Symposium on Foundations of Computer Science
  (FOCS)}, pages 429--438. IEEE, 2013.

\bibitem{duchi-jordan-wainwright14}
J.~C. Duchi, M.~I. Jordan, and M.~J. Wainwright.
\newblock Privacy aware learning.
\newblock {\em J. ACM}, 61(6):38:1--38:57, Dec. 2014.

\bibitem{dwork:dp}
C.~Dwork.
\newblock Differential privacy.
\newblock {\em Automata, Languages and Programming}, pages 1--12, 2006.

\bibitem{dwork:epsilondelta}
C.~Dwork, K.~Kenthapadi, F.~McSherry, I.~Mironov, and M.~Naor.
\newblock Our data, ourselves: Privacy via distributed noise generation.
\newblock In {\em Advances in Cryptology-EUROCRYPT}, pages 486--503. Springer,
  2006.

\bibitem{dwork-rothblum-vadhan10}
C.~Dwork, G.~N. Rothblum, and S.~Vadhan.
\newblock Boosting and differential privacy.
\newblock In {\em 51st Annual Symposium on Foundations of Computer Science
  (FOCS)}, pages 51--60. IEEE, Oct 2010.

\bibitem{hardt-rothblum10}
M.~Hardt and G.~N. Rothblum.
\newblock A multiplicative weights mechanism for privacy-preserving data
  analysis.
\newblock In {\em 51st Annual Symposium on Foundations of Computer Science
  (FOCS)}, pages 61--70. IEEE, Oct 2010.

\bibitem{jorgensen:personalizeddp}
Z.~Jorgensen, T.~Yu, and G.~Cormode.
\newblock Conservative or liberal? personalized differential privacy.
\newblock In {\em 31st International Conference on Data Engineering}, pages
  1023--1034. IEEE, April 2015.

\bibitem{kairouz:composition}
P.~Kairouz, S.~Oh, and P.~Viswanath.
\newblock The composition theorem for differential privacy.
\newblock In {\em 32nd International Conference on Machine Learning}, 2015.

\bibitem{kifer:nfl}
D.~Kifer and A.~Machanavajjhala.
\newblock No free lunch in data privacy.
\newblock In {\em SIGMOD Int'l. Conference on Management of data}, pages
  193--204. ACM, 2011.

\bibitem{liu2015resolvability}
J.~Liu, P.~Cuff, and S.~Verd{\'u}.
\newblock Resolvability in $e_\gamma$ with applications to lossy compression
  and wiretap channels.
\newblock In {\em Int'l. Symp. on Information Theory (ISIT)}, pages 755--759.
  IEEE, 2015.

\bibitem{mcgregor:two}
A.~McGregor, I.~Mironov, T.~Pitassi, O.~Reingold, K.~Talwar, and S.~Vadhan.
\newblock The limits of two-party differential privacy.
\newblock In {\em 51st Annual Symposium on Foundations of Computer Science
  (FOCS)}, pages 81--90. IEEE, 2010.

\bibitem{mcsherry:comp}
F.~D. McSherry.
\newblock Privacy integrated queries: an extensible platform for
  privacy-preserving data analysis.
\newblock In {\em SIGMOD International Conference on Management of data}, pages
  19--30. ACM, 2009.

\bibitem{polyanskiy2010channel}
Y.~Polyanskiy, H.~V. Poor, and S.~Verd{\'u}.
\newblock Channel coding rate in the finite blocklength regime.
\newblock {\em IEEE Trans. on Information Theory}, 56(5):2307--2359, 2010.

\bibitem{rastogi:adversarial}
V.~Rastogi, M.~Hay, G.~Miklau, and D.~Suciu.
\newblock Relationship privacy: output perturbation for queries with joins.
\newblock In {\em 28th SIGMOD-SIGACT-SIGART Symposium on Principles of Database
  Systems}, pages 107--116. ACM, 2009.

\bibitem{verdu:alpha}
S.~Verd{\'u}.
\newblock $\alpha$-mutual information.
\newblock In {\em Information Theory and Applications Workshop}, 2015.

\bibitem{wang-ying-zhang14}
W.~Wang, L.~Ying, and J.~Zhang.
\newblock On the relation between identifiability, differential privacy, and
  mutual-information privacy.
\newblock In {\em 52nd Annual Allerton Conference on Communication, Control,
  and Computing (Allerton)}, pages 1086--1092, Sept 2014.

\bibitem{yang:bayesian}
B.~Yang, I.~Sato, and H.~Nakagawa.
\newblock Bayesian differential privacy on correlated data.
\newblock In {\em SIGMOD International Conference on Management of Data}, pages
  747--762. ACM, 2015.

\bibitem{zhang:entropy}
Z.~Zhang.
\newblock Estimating mutual information via kolmogorov distance.
\newblock {\em IEEE Transactions on Information Theory}, 53(9):3280--3282,
  2007.

\end{thebibliography}

\appendix

\section{Proof of Property~\ref{PROPERTY:E TO KL}}
\label{PROOF:E TO KL}

Assume that
\begin{equation}
    P \stackrel{(\epsilon, 0)}{\approx} Q.
\end{equation}
As stated in \eqref{eq:e0 restatement}, this gives
\begin{equation}
    \left| \ln \frac{dP}{dQ}(a) \right| \leq \epsilon \quad \forall a \in \Omega,
\end{equation}
which is equivalent to
\begin{equation}
    \frac{dP}{dQ}(a) \in \left[ e^{-\epsilon}, e^{\epsilon} \right] \quad \forall a \in \Omega.
\end{equation}

Consider that
\begin{align}
    D(P\|Q) &= \int dP(a) \ln \frac{dP}{dQ}(a) \nonumber \\
    &= \int dQ(a) \frac{dP}{dQ}(a) \ln \frac{dP}{dQ}(a) \nonumber \\
    &= \mathbb{E} \left[ \frac{dP}{dQ}(X) \ln \frac{dP}{dQ}(X) \right],
\end{align}
where $X \sim Q$.

Let us define the random variable $Z = \frac{dP}{dQ}(X)$.  We know the following facts:
\begin{align}
    Z &\in \left[ e^{-\epsilon}, e^{\epsilon} \right] \quad \text{w.p. } 1, \\
    \mathbb{E} [Z] &= 1, \\
    D(P\|Q) &= \mathbb{E} [Z \ln Z].
\end{align}

Since the function $f(x) = x \ln x$ for $x > 0$ is convex, we know that a distribution of $Z$ that maximizes $D(P\|Q)$ under these constraints places all mass at the endpoints of the allowed support interval.  Therefore, maximum $D(P\|Q)$ occurs with the following choice of distribution for $Z$:
\begin{equation}
    Z = 
    \begin{cases}
    e^{\epsilon}, & \text{w.p. } \frac{1 - e^{-\epsilon}}{e^{\epsilon} - e^{-\epsilon}}, \\
    e^{-\epsilon}, & \text{w.p. } \frac{e^{\epsilon} - 1}{e^{\epsilon} - e^{-\epsilon}}.
    \end{cases}
\end{equation}
A computation of $\mathbb{E} [Z \ln Z]$ gives the desired result.

This extreme is achieved by a symmetric pair of binary distributions, consistent with the distribution of $Z$ derived above.  Thus, coincidentally, for this choice of extreme distributions that maximize $D(P\|Q)$, it turns out that $D(P\|Q) = D(Q\|P)$.

The relaxation in Property~\ref{PROPERTY:E TO KL} can be arrived at by making the following observation:
\begin{align}
    \epsilon \frac{\left( e^{\epsilon} - 1 \right) \left( 1 - e^{-\epsilon} \right)}{\left( e^{\epsilon} - 1 \right) + \left( 1 - e^{-\epsilon} \right)} &\leq \epsilon \left( 1 - e^{-\epsilon} \right) \nonumber \\
    &\leq \min \left\{ \epsilon, \epsilon^2 \right\}. \label{eq:kl improve duchi}
\end{align}
Other bounds in the literature (Lemma~III.2 of \cite{dwork-rothblum-vadhan10} and Theorem~1 of \cite{duchi:lp}), while slightly loose, establish that $(\epsilon, 0)$-closeness implies an upper bound of roughly $\epsilon^2$ nats of Kullback-Leibler divergence for small $\epsilon$, which is only off by a factor of two.
\qed

\section{Proof of Property~\ref{PROPERTY:E TO D}}
\label{PROOF:E TO D}

Assume the $P$ and $Q$ are $(\epsilon, \delta)$-close.  To show that they are $(\epsilon', \delta')$-close, we must show that for any $A \in {\cal F}$
\begin{align}
    P(A) &\leq \delta' + e^{\epsilon'} Q(A), \label{proof p less q} \\
    Q(A) &\leq \delta' + e^{\epsilon'} P(A). \label{proof q less p}
\end{align}
By symmetry, we need only argue \eqref{proof p less q}.
    
We will build the proof from two inequalities.  The first is a direction application of \eqref{eq:ed p less q}:
\begin{equation}
    P(A) \leq \delta + e^{\epsilon} Q(A). \label{eq:repeat of p less q}
\end{equation}
The second is an application of \eqref{eq:ed q less p} to the complement of $A$, denoted as $A^c$:
\begin{equation}
    Q(A^c) \leq \delta + e^{\epsilon} P(A^c).
\end{equation}
By substituting $P(A^c) = 1 - P(A)$ and $Q(A^c) = 1 - Q(A)$ and rearranging, this implies
\begin{equation}
    P(A) \leq 1 - e^{-\epsilon}(1 - \delta) + e^{-\epsilon} Q(A). \label{eq:second bound p less q}
\end{equation}
    
Now we complete the proof with some simple manipulations and by substituting the value $\delta' = 1 - \frac{\left( e^{\epsilon'} + 1 \right)(1-\delta)}{e^{\epsilon}+1}$ stated in Property~\ref{PROPERTY:E TO D}.  From \eqref{eq:repeat of p less q} we can conclude
\begin{align}
    P(A) &\leq \delta + e^{\epsilon} Q(A) \nonumber \\
    &= \delta' + e^{\epsilon'} Q(A) + (\delta - \delta') + \left( e^{\epsilon} - e^{\epsilon'} \right) Q(A) \nonumber \\
    &= \delta' + e^{\epsilon'} Q(A) + \left( e^{\epsilon} - e^{\epsilon'} \right) \left( Q(A) - \frac{1 - \delta}{e^{\epsilon} + 1} \right). \label{eq:p bound small q}
\end{align}
From \eqref{eq:second bound p less q} we have
\begin{align}
    P(A) &\leq 1 - e^{-\epsilon}(1 - \delta) + e^{-\epsilon} Q(A) \nonumber \\
    &= \delta' + e^{\epsilon'} Q(A) + \left( 1 - e^{-\epsilon} (1 - \delta) - \delta' \right) + \left( e^{-\epsilon} - e^{\epsilon'} \right) Q(A) \nonumber \\
    &= \delta' + e^{\epsilon'} Q(A) + \left( e^{\epsilon'} - e^{-\epsilon} \right) \left( \frac{1 - \delta}{e^{\epsilon} + 1} - Q(A) \right). \label{eq:p bound large q}
\end{align}
If $Q(A) \leq \frac{1 - \delta}{e^{\epsilon} + 1}$ then \eqref{eq:p bound small q} establishes \eqref{eq:repeat of p less q}, since $\epsilon \geq \epsilon' \geq 0$.  Otherwise, \eqref{eq:p bound large q} establishes \eqref{eq:repeat of p less q}.
\qed

\section{Proof of Lemma~\ref{LEM:MI IMPLIES D}}
\label{PROOF:MI IMPLIES D}

According to the arguments immediately following Lemma~\ref{LEM:MI IMPLIES D}, we only need to show that the claim holds for randomized mechanisms $P_{Y|X}$ that have binary input and binary output.  That is, $|{\cal X}| = |{\cal Y}| = 2$.
    
Start by assuming that the randomized mechanism $P_{Y|X}$ satisfies $\epsilon$-MI-DP.  Since $X$ is a database with only one entry, $\epsilon$-MI-DP simply means
\begin{equation}
    \max_{P_X} I(X;Y) \leq \epsilon \text{ nats}.
\end{equation}
Notice that the left side is the expression for channel capacity from information theory, where $P_{Y|X}$ would be interpreted as a communication channel.  With this interpretation, what we are trying to show is that a bound on the channel capacity for binary channels implies a total variation bound between the conditional output distributions.  It has already been argued that binary channels contain the extreme cases, since total variation can be expressed as an inequality relating probabilities of a single arbitrary set (in general, the form of \eqref{eq:ed p less q} and \eqref{eq:ed q less p} gives this conclusion).  The next step is to show, specifically for total variation, that binary {\em symmetric} channels are the extreme cases.
    
Because $X$ and $Y$ are binary, the channel $P_{Y|X}$ can be parametrized with two parameters:
\begin{align}
    a &\triangleq \mathbb{P} [Y=1 | X=0], \\
    b &\triangleq \mathbb{P} [Y=1 | X=1].
\end{align}
Now consider the complementary channel $P_{\tilde{Y}|\tilde{X}}$, where $\tilde{X} = X \oplus 1$ and $\tilde{Y} = Y \oplus 1$, with $\oplus$ representing addition modulo 2.  This gives
\begin{align}
    \mathbb{P} \left[ \tilde{Y}=1 \middle| \tilde{X}=0 \right] &= 1 - b, \\
    \mathbb{P} \left[ \tilde{Y}=1 \middle| \tilde{X}=1 \right] &= 1 - a.
\end{align}
Finally, define a new binary channel, denoted as $P_{\hat{Y}|\hat{X}}$, which is a convex combination of the original channel and the complementary channel.  Then,
\begin{align}
    \mathbb{P} \left[ \hat{Y}=1 \middle| \hat{X}=0 \right] &= \frac{1}{2} + \frac{a - b}{2}, \\
    \mathbb{P} \left[ \hat{Y}=1 \middle| \hat{X}=1 \right] &= \frac{1}{2} - \frac{a - b}{2}.
\end{align}
    
Notice that for all three channels, $P_{Y|X}$, $P_{\tilde{Y}|\tilde{X}}$, and $P_{\hat{Y}|\hat{X}}$, the total variation between the two conditional output distributions is the same:
\begin{align}
    \left\| P_{Y|X=0} - P_{Y|X=1} \right\|_{TV} &= \left\| P_{\tilde{Y}|\tilde{X}=0} - P_{\tilde{Y}|\tilde{X}=1} \right\|_{TV} \nonumber \\
    &= \left\| P_{\hat{Y}|\hat{X}=0} - P_{\hat{Y}|\hat{X}=1} \right\|_{TV} \nonumber \\
    &= |a - b|.
\end{align}
On the other hand, channel capacity is a convex function of the channel parameters.  By symmetry, $P_{Y|X}$ and $P_{\tilde{Y}|\tilde{X}}$ have the same capacity.  Therefore, the convex combination $P_{\hat{Y}|\hat{X}}$, which is a binary {\em symmetric} channel, has a lower capacity.  Thus, binary symmetric channels are the extreme points in the trade-off between capacity and total variation.  For every binary channel, there is a binary symmetric channel with the same capacity but with greater or equal total variation distance between the conditional output distributions.
    
Finally, we arrive at Lemma~\ref{LEM:MI IMPLIES D} by applying the formula for channel capacity of a binary symmetric channel.  If we denote by $\delta$ the total variation distance between the conditional output distributions, then the cross-over probability is $\frac{1}{2} - \frac{\delta}{2}$.  The channel capacity is then
\begin{equation}
    C = \ln 2 - h \left( \frac{1}{2} - \frac{\delta}{2} \right) \text{ nats},
\end{equation}
where $h(\cdot)$ is the binary entropy function in nats.  Inverting this equation gives \eqref{eq:mi to d tight}.
    
The relaxed bound in \eqref{eq:mi to d simple} is established by the fact that the second order Tailor expansion of $h(x)$ about $x=\frac{1}{2}$ is in fact an upper bound:
\begin{equation}
    h(x) \leq \ln 2 - 2 \left( x - \frac{1}{2} \right)^2.
\end{equation}

An alternative simple argument directly arrives at the looser bound in \eqref{eq:mi to d simple}, without even reducing to the binary case.  We again refer to the geometric interpretation of capacity as the radius of the information ball \cite[Theorem~13.1.1]{cover:it}.  By Pinsker's inequality (Property~\ref{property:kl to d}), each conditional output distribution is within total variation distance $\sqrt{\frac{\epsilon}{2}}$ of the center of the information ball.  The triangle inequality gives \eqref{eq:mi to d simple}.
\qed

\section{Proof of Lemma~\ref{LEM:D IMPLIES MI}}
\label{PROOF:D TO MI}

Assume that the randomized mechanism $P_{Y|X^n}$ is $(\delta)$-DP, and let $i \in \{1,\ldots,n\}$ and $P_{X^n}$ be arbitrary.

Two proof arguments are needed, one based on the database entries $\{X_i\}$ having a finite set of possible values, and the other based on the same for the query response $Y$.  In both cases, however, we first note that the conditional mutual information $I(X_i;Y|X^{-i})$ is an expected value over instances of $X^{-i}$.  We provide bounds that uniformly hold for each instance of $x^{-i}$.  To that end, fix $x^{-i}$ arbitrarily, and let $(\tilde{X}, \tilde{Y}) \sim P_{X_i,Y|X^{-i}=x^{-i}}$.  This gives,
\begin{equation}
    I(X_i;Y|X^{-i}=x^{-i}) = I(\tilde{X};\tilde{Y}).
\end{equation}
Notice further that any two databases in the set $\{ \tilde{x}^n$ \; : \; $\tilde{x}^{-i} = x^{-i} \}$ are neighbors according to Definition~\ref{def:neighbor}.  Therefore, by assumption, \begin{equation}
\label{eq:tilde channel close other}
    P_{\tilde{Y}|\tilde{X}=\tilde{x}_1} \stackrel{(0,\delta)}{\approx} P_{\tilde{Y}|\tilde{X}=\tilde{x}_2}
\end{equation}
for any two values $\tilde{x}_1$ and $\tilde{x}_2$.

We now aim to bound $I(\tilde{X};\tilde{Y})$.  Consider first the case where $|{\cal Y}| < \infty$.  By construction, $|\tilde{\cal Y}| = |{\cal Y}|$.

From \eqref{eq:tilde channel close other} we can claim that for any value $\tilde{x}$
\begin{equation}
\label{eq:tilde channel close output}
    P_{\tilde{Y}|\tilde{X}=\tilde{x}} \stackrel{(0,\delta)}{\approx} P_{\tilde{Y}}.
\end{equation}
This is justified by letting $X' \sim P_{\tilde{X}}$ and noting
\begin{align}
    \left\| P_{\tilde{Y}|\tilde{X}=\tilde{x}} - P_{\tilde{Y}} \right\|_{TV} &= \left\| P_{\tilde{Y}|\tilde{X}=\tilde{x}} - \mathbb{E} \left[ P_{\tilde{Y}|\tilde{X}=X'} \right] \right\|_{TV} \nonumber \\
    &\leq \mathbb{E} \left[ \left\| P_{\tilde{Y}|\tilde{X}=\tilde{x}} - P_{\tilde{Y}|\tilde{X}=X'} \right\|_{TV} \right] \nonumber \\
    &\leq \delta, \label{eq:tilde channel close output justification}
\end{align}
where the first inequality is due to Jensen's inequality and the convexity of the total variation distance.

Next we decompose mutual information into entropy terms:
\begin{equation}
\label{eq:mi entropy decomposition Y}
    I(\tilde{X};\tilde{Y}) = H(\tilde{Y}) - H(\tilde{Y}|\tilde{X}).
\end{equation}

Finally, a continuity property of entropy found in \cite{zhang:entropy} (see (4) within), derived from optimal coupling and Fano's inequality, bounds the difference in entropy as a function of total variation distance and $|\tilde{\cal Y}|$.  Combining this with \eqref{eq:tilde channel close output} and \eqref{eq:mi entropy decomposition Y} gives
\begin{align}
    I(\tilde{X};\tilde{Y}) &\leq
    \begin{cases}
    h(\delta) + \delta \ln \left( |\tilde{\cal Y}| - 1 \right), & \delta \leq \frac{|\tilde{\cal Y}| - 1}{|\tilde{\cal Y}|}, \\
    \ln |\tilde{\cal Y}|, & \delta > \frac{|\tilde{\cal Y}| - 1}{|\tilde{\cal Y}|}
    \end{cases} \label{eq:d to mi tighter y} \\
    &\leq h(\delta) + \delta \ln |\tilde{\cal Y}| \text{ nats}.
\end{align}
    
Next we consider the case where $\max_i |{\cal X}_i| < \infty$.  By construction, $|\tilde{\cal X}| \leq \max_i |{\cal X}_i|$.

For this case, we take \eqref{eq:tilde channel close output} a bit further.  In fact,
\begin{equation}
\label{eq:tilde joint close independent}
    P_{\tilde{X},\tilde{Y}} \stackrel{(0,\delta)}{\approx} P_{\tilde{X}}P_{\tilde{Y}}.
\end{equation}
This is justified by letting $X' \sim P_{\tilde{X}}$ and noting
\begin{equation}
    \left\| P_{\tilde{X},\tilde{Y}} - P_{\tilde{X}} P_{\tilde{Y}} \right\|_{TV} = \mathbb{E} \left[ \left\| P_{\tilde{Y}|\tilde{X}=X'} - P_{\tilde{Y}} \right\|_{TV} \right].
\end{equation}

This time we decompose mutual information in the reverse direction:
\begin{equation}
\label{eq:mi entropy decomposition X}
    I(\tilde{X};\tilde{Y}) = H(\tilde{X}) - H(\tilde{X}|\tilde{Y}).
\end{equation}

To complete the proof we need a continuity argument for condition entropy.  The following lemma is inspired by ideas from \cite{boche:entropy} which in turn come from \cite{alicki-fannes04}.

\begin{lem}[Continuity of conditional entropy]
\label{lem:cont conditional entropy}
    If $P$ and $Q$ are two distributions on $\mathcal{U} \times \mathcal{V}$ with $|\mathcal{U}| < \infty$, then
    \begin{gather}
        \begin{split}
            P &\stackrel{(0, \delta)}{\approx} Q \\
            &\Downarrow \\
            \left| H_P(U|V) - H_Q(U|V) \right| &\leq 2 h \left( \frac{\delta}{\delta + 1} \right) + 2 \frac{\delta}{\delta + 1} \log |{\cal U}|.
        \end{split}
    \end{gather}
\end{lem}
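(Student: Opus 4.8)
\textbf{Proof plan for Lemma~\ref{lem:cont conditional entropy}.}
The plan is to reduce the statement to an existing continuity bound for \emph{unconditional} entropy via an optimal-coupling argument, exactly in the spirit of the Alicki--Fannes approach referenced in the excerpt. First I would invoke the coupling characterization of total variation: since $\|P-Q\|_{\tv}\le\delta$ on $\mathcal U\times\mathcal V$, there is a joint distribution $R$ on $(\mathcal U\times\mathcal V)^2$ with marginals $P$ and $Q$ under which the two copies $(U,V)$ and $(U',V')$ disagree with probability at most $\delta$. Introduce the coupling-error indicator $E=\mathbf 1\{(U,V)\neq(U',V')\}$, so $\Pr[E=1]\le\delta$. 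The key move is to compare $H_P(U|V)$ and $H_Q(U|V)$ by passing through a common reference quantity computed under $R$; concretely, I would bound $|H_P(U|V)-H_Q(U|V)|$ by $|H_R(U|V,E)-H_R(U'|V',E)|$ plus the cost of conditioning on $E$, using that $H_R(U|V)\le H_R(U|V,E)+H(E)$ and $H_R(U|V,E)\le H_R(U|V)$.

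The second step is to control the conditional-on-$E$ terms. On the event $E=0$ the two copies are identical, so the contribution to the difference vanishes there. On the event $E=1$, whose probability is some $p\le\delta$, conditional entropies are each at most $\log|\mathcal U|$, so that slice contributes at most $p\log|\mathcal U|$ to the difference after weighting. Assembling the pieces and writing $H(E)=h(p)$, a short calculation gives a bound of the shape $2h(p)+2p\log|\mathcal U|$; since $p\le\delta$ and the function $q\mapsto 2h(q)+2q\log|\mathcal U|$ is increasing on the relevant range, it suffices to evaluate at $q=\delta$. To obtain the stated bound with the argument $\tfrac{\delta}{\delta+1}$ rather than $\delta$, I would instead use the ``enlarged coupling'' normalization that Alicki--Fannes-type arguments employ: work with the subnormalized difference of the two measures, whose total mass is $2\|P-Q\|_{\tv}$, and rescale so the effective error parameter becomes $\frac{\delta}{\delta+1}$; this is precisely the trick that replaces $\delta$ by $\delta/(1+\delta)$ in Fannes-type continuity estimates and matches the form quoted from \cite{boche:entropy}.

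The main obstacle I anticipate is getting the \emph{constants} and the exact argument $\frac{\delta}{\delta+1}$ right: a naive coupling gives a bound in terms of $\delta$ directly, and squeezing it into the sharper $\frac{\delta}{\delta+1}$ form requires the careful rescaling/normalization step and attention to the fact that the inequality must hold in both directions (hence the factor $2$, once for bounding $H_P(U|V)$ above in terms of $H_Q(U|V)$ and once for the reverse). A secondary technical point is that only $|\mathcal U|<\infty$ is assumed, not $|\mathcal V|<\infty$; so every place where a cardinality bound is used must involve only $\mathcal U$, which is automatic here since we only ever bound conditional entropies of $U$ (never of $V$) by $\log|\mathcal U|$. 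Once these bookkeeping issues are handled, plugging the lemma into \eqref{eq:mi entropy decomposition X} together with \eqref{eq:tilde joint close independent} yields the $\max_i|\mathcal X_i|$ branch of Lemma~\ref{LEM:D IMPLIES MI}.
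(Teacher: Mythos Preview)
Your plan is sound, but it follows a different route from the paper's actual proof, and it is worth being explicit about the difference.

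The paper does \emph{not} use a coupling on $(\mathcal U\times\mathcal V)^2$. Instead it executes the Alicki--Fannes construction directly: using the Hahn decomposition $P-Q=\mu^+-\mu^-$ (each part with total mass $\delta=\|P-Q\|_{\tv}$), it forms the single intermediate measure
\[
p^*=\frac{1}{1+\delta}(P+\mu^-)=\frac{1}{1+\delta}(Q+\mu^+),
\]
so that both $P$ and $Q$ appear as the weight-$\frac{1}{1+\delta}$ component of $p^*$. A Bernoulli flag $B\sim\mathrm{Bern}\!\left(\frac{\delta}{1+\delta}\right)$ with $p^*_{U,V|B=0}=P$ and $p^*_{U,V|B=1}=\hat p$ then gives $|H_{p^*}(U|V)-H_P(U|V)|\le h\!\left(\frac{\delta}{1+\delta}\right)+\frac{\delta}{1+\delta}\log|\mathcal U|$, and the same with $Q$ in place of $P$; the factor $2$ arises from the triangle inequality through $p^*$, not from arguing the two directions separately. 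The parameter $\frac{\delta}{1+\delta}$ is thus an automatic consequence of the normalization of $p^*$, not an extra ``rescaling'' step bolted onto a coupling argument.

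Your optimal-coupling argument with the indicator $E$ is a genuine alternative and, carried out carefully, even yields a bound of the form $h(\delta)+\delta\log|\mathcal U|$ (you overcount slightly in reaching $2h(p)+2p\log|\mathcal U|$: only one $H(E)$ is needed, and on $\{E=1\}$ the two conditional entropies differ by at most $\log|\mathcal U|$, not $2\log|\mathcal U|$). That bound, however, is in terms of $\delta$ rather than $\frac{\delta}{1+\delta}$, so it does not literally prove the lemma as stated; your instinct to fall back on the Alicki--Fannes normalization to recover the exact form is correct, and that is precisely the $p^*$ construction the paper uses. Your remark about only needing $|\mathcal U|<\infty$ is right for both arguments.
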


\begin{proof}
    Since the bound in the lemma is monotonic in $\delta$, we assume without loss of generality that
    \begin{equation}
        \| P - Q \|_{TV} = \delta.
    \end{equation}
    
    We first translate closeness in total variation distance to the existence of a common distribution that is close to both relative to the boundaries of the set of probability distributions.  To be more precise, there exists a probability distribution $p^*$ which is a convex combination of $P$ and another probability distribution, with most of the convex weight on $P$, and the same relationship holds between $p^*$ and $Q$.  That is:
    \begin{align}
        p^* &= \frac{1}{1+\delta} P + \frac{\delta}{1+\delta} \hat{p} \label{eq:convex p} \\
        &= \frac{1}{1+\delta} Q + \frac{\delta}{1+\delta} \hat{q}. \label{eq:convex q}
    \end{align}
    Once we have established this existence, the exact construction of $p^*$, $\hat{p}$, and $\hat{q}$ will have no consequence on the conclusion.
    
    Consider the Hahn decomposition of the signed measure $P-Q$ into positive and negative parts that are mutually singular, represented by the non-negative measures $\mu^+$ and $\mu^-$, as follows:
    \begin{align}
        P - Q &= \mu^+ - \mu^-, \\
        \mu^+ &\geq 0, \\
        \mu^- &\geq 0, \\
        \mu^+ &\perp \mu^-.
    \end{align}
    The total measure of each part, $\mu^+$ and $\mu^-$, is the total variation between $P$ and $Q$, which is $\delta$.  Thus, to normalize $\mu^+$ and $\mu^-$ to become probability measures, we must divide by $\delta$.
    
    Let
    \begin{align}
        \hat{p} &= \frac{1}{\delta} \mu^-, \\
        \hat{q} &= \frac{1}{\delta} \mu^+.
    \end{align}
    Then we have that $p^*$ is the greater part of $P$ and $Q$, normalized as
    \begin{align}
        p^* &= \frac{1}{1 + \delta} \left( P + \mu^- \right) \\
        &= \frac{1}{1 + \delta} \left( Q + \mu^+ \right),
    \end{align}
    which satisfies both \eqref{eq:convex p} and \eqref{eq:convex q}.
    
    Next, to complete the proof, we show that
    \begin{align}
        \left| H_P(U|V) - H_{p^*}(U|V) \right| &\leq h \left( \frac{\delta}{\delta + 1} \right) + \frac{\delta}{\delta + 1} \log |\cal{U}|, \\
        \left| H_Q(U|V) - H_{p^*}(U|V) \right| &\leq h \left( \frac{\delta}{\delta + 1} \right) + \frac{\delta}{\delta + 1} \log |\cal{U}|.
    \end{align}
    By symmetry, an argument for only one of the inequalities is needed.
    
    The following bound is aided by defining a binary random variable $B \sim \rm{Bern}\left( \frac{\delta}{1 + \delta} \right)$ from which we construct $p^*_{U,V|B=0} = P$ and $p^*_{U,V|B=1} = \hat{p}$.  This has no effect on the marginal distribution of $p^*_{U,V}$.  We have
    \begin{align}
        H_{p^*}(U|V) &= H_{p^*}(U|V,B) + I_{p^*}(U;B|V) \nonumber \\
        &= \frac{1}{1 + \delta} H_P(U|V) + \frac{\delta}{1 + \delta} H_{\hat{p}}(U|V) + I_{p^*}(U;B|V).
    \end{align}
    Subtracting $H_P(U|V)$ from both sides gives
    \begin{align}
        &H_{p^*}(U|V) - H_P(U|V) \nonumber \\
        &= \frac{\delta}{1+\delta} \left( H_{\hat{p}}(U|V) - H_{p^*}(U|V) \right) + I_{p^*}(U;B|V).
    \end{align}
    Finally, the argument is completed by bounding the three non-negative terms.  The entropy terms are bounded by $\log |{\cal U}|$.  For the conditional mutual information, $I_{p^*}(U;B|V) \leq H_{p^*}(B) = h \left( \frac{\delta}{\delta + 1} \right)$.
\end{proof}

The proof of Lemma~\ref{LEM:D IMPLIES MI} is completed by applying Lemma~\ref{lem:cont conditional entropy} with $P_{\tilde{X},\tilde{Y}}$ as $P$ and $P_{\tilde{X}} P_{\tilde{Y}}$ as $Q$, due to \eqref{eq:tilde joint close independent}.  Combined with \eqref{eq:mi entropy decomposition X} this gives
\begin{align}
    I(\tilde{X};\tilde{Y}) &\leq 2 h \left( \frac{\delta}{\delta + 1} \right) + 2 \frac{\delta}{\delta + 1} \ln |\tilde{\cal X}| \label{eq:d to mi tighter x} \\
    &\leq 2 h(\delta) + 2 \delta \ln \left( |\tilde{\cal X}| + 1 \right) \text{ nats}. \hfill \qed
\end{align}

\balancecolumns
\end{document}